\newcommand{\argmax}{\text{argmax}}
\newcommand{\dist}{d}
\newcommand{\alg}{\textsc{Alg}}
\newcommand{\opt}{\textsc{Opt}}
\newcommand{\mst}{\mathrm{MST}}
\newcommand{\blocking}{\textsc{Blocking}}
\newcommand{\optspan}{\textsc{OptSpan}_\delta}
\newcommand{\cost}{W_{\blocking}}
\renewcommand{\subset}{\subseteq}
\renewcommand{\epsilon}{\varepsilon}
\title{Exploration of graphs with excluded minors}
\author{Júlia Baligács}{Technische Universität Darmstadt, Germany}{baligacs@mathematik.tu-darmstadt.de}{https://orcid.org/0000-0003-2654-149X}{}
\author{Yann Disser}{Technische Universität Darmstadt, Germany  }{disser@mathematik.tu-darmstadt.de}{https://orcid.org/0000-0002-2085-0454}{}
\author{Irene Heinrich}{Technische Universität Darmstadt, Germany}{heinrich@mathematik.tu-darmstadt.de}{https://orcid.org/0000-0001-9191-1712}{}
\author{Pascal Schweitzer}{Technische Universität Darmstadt, Germany}{schweitzer@mathematik.tu-darmstadt.de}{}{}
\authorrunning{J. Baligács et al.}
\keywords{online algorithms, competitive analysis, graph exploration, graph spanners, minor-free graphs, bounded genus graphs}
\begin{document}

\begin{titlepage}

\maketitle

\begin{abstract}

We study the online graph exploration problem proposed by Kalyana\-sun\-daram and Pruhs~(1994) and prove a constant competitive ratio on minor-free graphs.
This result encompasses and significantly extends the graph classes that were previously known to admit a constant competitive ratio.
The main ingredient of our proof is that we find a connection between the performance of the particular exploration algorithm $\blocking$ and the existence of light spanners. 
Conversely, we exploit this connection to construct light spanners of bounded genus graphs.
In particular, we achieve a lightness that improves on the best known upper bound for genus~$g\geq 1$ and recovers the known tight bound for the planar case ($g=0$).

\end{abstract}

\end{titlepage}
\setcounter{page}{1}
\section{Introduction}
\label{sec:typesetting-summary}

We study a classic online graph exploration problem that was first proposed by Kalyana\-sun\-daram and Pruhs in 1994~\cite{pruhs94}.
In this setting, a single agent needs to systematically traverse an initially unknown, undirected, connected graph with non-negative edge weights.
Upon visiting a new vertex, the agent learns the unique identifiers of all adjacent vertices and the weights of the corresponding edges. The cost incurred when traversing an edge is simply its weight.
The objective in online graph exploration is to visit all vertices of the graph and return to the starting vertex while minimizing the total cost.

The performance of a (deterministic) online algorithm~$\alg$ is measured in terms of competitive analysis.
That is, given a graph $G$ and starting vertex $v$ of $G$, we compare the cost $\alg(G,v)$ of the traversal it produces to the cost of an offline optimum traversal~$\opt(G)$. 
Note that the optimum cost corresponds to the length of a shortest TSP tour of~$G$ and does not depend on $v$. 
We say that~$\alg$ is \emph{(strictly) $\rho$-competitive} for a class of graphs if $\alg(G,v) \leq \rho \cdot \opt (G)$ for every graph $G$ in the class and every vertex $v$ of $G$. 
The \emph{(strict) competitive ratio} of an algorithm $\alg$ is given by $\inf \left\{ \rho : \alg \text{ is $\rho$-competitive} \right\}$. 

Kalyanasundaram and Pruhs~\cite{pruhs94} posed the following question: 
\emph{Is there a deterministic algorithm for online graph exploration with a constant competitive ratio?}
Several algorithms were proposed with a competitive ratio of $\mathcal{O} ( \log (n))$ \cite{megow, NN}, where $n$ is the number of vertices, but better competitive ratios are only known for restricted classes of graphs~\cite{pruhs94, megow, miyazaki}.
The best known lower bound on the competitive ratio is 10/3 \cite{birx}.
In particular, the original question of Kalyanasundaram and Pruhs remains open.

We formalize a connection between the performance of the particular exploration algorithm $\blocking$ and the  existence of light spanners.
Spanners were introduced in 1989 by Peleg and Schäffer~\cite{peleg} and have been instrumental in the development of approximation algorithms, particularly for TSP~\cite{arora, borradaile, minorspanner}.
Here, a subgraph~$H=(V,E_H)$ of a connected, undirected graph~$G=(V,E)$ with edge weights~$w\colon E \to \mathbb{R}_{\geq 0}$ is called a \emph{$(1+\varepsilon)$-spanner} of~$G$ if $d_H(u,v) \leq (1+\varepsilon)\,d_G(u,v)$ for all $u,v \in V$, where $d_H$ and~$d_G$ denote the shortest-path distance in~$H$ and~$G$, respectively.
Then, $H$ has \emph{stretch} at most $(1+\varepsilon)$ and its \emph{lightness} is $w(H)/w(\mst)$, where $w(H):=\sum_{e \in E_H} w(e)$ and $\mst$ denotes a minimum spanning tree of~$G$.

We show that the online graph exploration algorithm $ \blocking$ has a constant competitive ratio on every class of graphs that admits spanners of constant lightness for a fixed stretch.
Prominent graph classes with this property are the classes with a forbidden minor~\cite{minorspanner}.  
We thus, in particular, obtain a constant competitive ratio for online graph exploration on all graph classes excluding a minor.
They encompass many other important classes, such as graphs of bounded genus or bounded treewidth.
Overall, this result subsumes and significantly extends all previously known graph classes for which a competitive ratio of~$o(\log (n))$ was known.

Regarding research for graph spanners, results typically revolve around the existence of good, in particular light, spanners.
For example, the Erd\H{o}s girth conjecture~\cite{erdos} is equivalent to a lower bound of~$\Omega(n^{1/k})$ on the best lightness of a $(2k-1)$-spanner in unweighted graphs.
While this conjecture remains unresolved, a nearly matching upper bound was proven by Chechik and Wulff-Nilsen~\cite{chechik}.
Various constant upper bounds on the lightness are known for restricted classes of graphs~\cite{althoefer,  minorspanner, demainetreewidth, grigni}.
Our newly discovered connection to graph exploration also allows us to contribute an improved upper bound for graphs of bounded genus using the ideas given in \cite{megow}.

\subparagraph*{Our results} 

We significantly expand the class of graphs on which the exploration problem admits a constant-competitive algorithm.

\begin{theorem}\label{thm:explorationminor}
For every graph $H$ and constant $\delta>0$, there is a constant $c$ (depending on~$H$ and $\delta$) such that $\blocking_\delta$ is $c$-competitive on $H$-minor-free graphs.
\end{theorem}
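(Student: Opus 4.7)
The plan is to derive the theorem from a more general reduction, stated earlier in the paper, relating the performance of $\blocking_\delta$ to the existence of light spanners in the input graph class, combined with the known spanner bound for minor-free graphs. Recall from the excerpt that the paper's main technical contribution identifies $\blocking_\delta$ as constant-competitive on any graph class that admits $(1+\varepsilon)$-spanners of constant lightness for a fixed stretch.

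The first step is to establish the key lemma: if $G$ admits a $(1+\varepsilon)$-spanner $H$ of lightness at most $L$, then $\blocking_\delta(G,v)\le c(\delta,\varepsilon,L)\cdot\opt(G)$. I would analyze the traversal by charging its cost to edges of $H$. The algorithm $\blocking_\delta$ performs a DFS-like exploration, temporarily skipping (``blocking'') edges whose weight is too large relative to distances in the already-explored subgraph, with $\delta$ controlling the threshold. The cost of $\blocking_\delta$ then splits into two parts: the cost of traversing non-blocked edges, which form a tree-like backbone whose total weight should be $O(w(H))$, and the cost of detours incurred when previously blocked edges are eventually visited. The $(1+\varepsilon)$-spanner property ensures that any such detour can be replaced by a short path in $H$, while the blocking threshold $\delta$ limits how often each spanner edge can get charged.

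Once the key lemma is in place, the theorem follows immediately. By the result of~\cite{minorspanner}, the class of $H$-minor-free graphs admits $(1+\varepsilon)$-spanners of lightness at most $L(H,\varepsilon)$ for any fixed $\varepsilon>0$. Fixing, say, $\varepsilon=1$, the lemma yields $\blocking_\delta(G,v)\le c(\delta,1,L(H,1))\cdot\opt(G)$, which is precisely a constant competitive ratio $c=c(H,\delta)$.

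The main obstacle is the proof of the key lemma, in particular the amortized charging scheme for the detour part. Since $\blocking_\delta$ is oblivious to $H$, it may block edges of $H$ or detour around non-spanner edges, so the charging must be designed in such a way that each edge of $H$ is charged only $O(1)$ times overall. I expect this to require a careful balance between $\delta$ and $\varepsilon$: the slack from the spanner stretch must absorb the overhead incurred by the blocking rule, and $\delta$ must be chosen small enough that each detour stays within a constant factor of the corresponding spanner distance without blowing up the number of detours per spanner edge.
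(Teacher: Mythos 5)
Your high-level plan — reduce the theorem to a lemma that links $\blocking_\delta$'s cost to light spanners, then invoke the lightness bound for minor-free graphs from~\cite{minorspanner} — is the same as the paper's. But the way you propose to prove that key lemma is where the gap lies, and you flag the obstacle yourself without resolving it.

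The paper does \emph{not} charge the algorithm's traversal cost to an externally chosen spanner $H$ of $G$. Instead it proves (Lemma~\ref{lem:blockingcycles}) that the set $B$ of boundary edges actually traversed, together with a carefully chosen minimum spanning tree $\mstb$, has the property that every cycle $C$ in $B \cup \mstb$ satisfies $w(C\setminus\{e\}) > (1+\delta)w(e)$ for every $e\in C$. Equivalently, $B\cup\mstb$ is its own unique $(1+\delta)$-spanner (Lemma~\ref{lem:blockingspanner}). Because $B\cup\mstb$ is a subgraph of $G$ and the class of $H$-minor-free graphs is closed under subgraphs, the lightness bound of~\cite{minorspanner} applies directly to $B\cup\mstb$, forcing $w(B\cup\mstb)\leq L\cdot w(\mstb)$. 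Combined with Observation~\ref{obs:boundP}, which bounds the traversal cost by $2(\delta+2)w(B)$, the competitive ratio follows with no amortized charging to a fixed spanner at all. This is the missing idea in your proposal: the spanner that matters is the algorithm's own footprint, not some spanner of $G$ computed offline.

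Two further mismatches. First, the paper couples the spanner stretch to the blocking parameter ($\varepsilon=\delta$), which is exactly what makes the cycle condition of Lemma~\ref{lem:blockingcycles} coincide with the defining property of a minimal $(1+\delta)$-spanner; your suggestion to "fix $\varepsilon=1$" independently of $\delta$ breaks that alignment and would require an argument you have not supplied. Second, your key lemma is phrased as a statement about a single graph $G$ admitting a light spanner, but the argument genuinely needs the lightness bound to hold for the \emph{subgraph} $B\cup\mstb$, which is why the paper works with a monotone graph class rather than a single graph. As written, your charging-based sketch for "the main obstacle" would not go through, and it is replaced in the paper by the much cleaner self-referential observation above.
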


The technical contribution of this work is a new-found connection between graph spanners and the performance of the exploration algorithm $\blocking_\delta$ (see Section \ref{sec:blocking}) introduced by Megow et al.~\cite{megow} based on an algorithm of Kalyanasundaram and Pruhs~\cite{pruhs94}. This connection will allow us to prove Theorem~\ref{thm:explorationminor}.

Prior to our work, the largest class of graphs which was known to admit a constant-competitive algorithm was the class of bounded genus graphs~\cite{megow}. 
As an aside, we obtain a slightly stronger bound also for bounded genus graphs (cf.~Corollary~\ref{cor:blockinggenus}).

So far, $\blocking_\delta$ was only studied for constant choices of the parameter $\delta$, i.e., independent of the number of vertices $n$. It is known that its competitive ratio is at least $\Omega (n^{1/(4+\delta )})$ if $\delta$ is a constant \cite{megow}. This naturally raises the question of whether improvement is possible if $\delta$ may depend on $n$. We obtain the following results. 

\begin{theorem}\label{thm:explorationlogn}
$\blocking_{\log (n)}$ is $O(\log(n))$-competitive.
\end{theorem}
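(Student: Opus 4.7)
}

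The plan is to exploit the connection between $\blocking_\delta$ and light spanners that is the technical heart of the paper (and which yields Theorem~\ref{thm:explorationminor}), but now in the regime where $\delta$ is allowed to scale with the input size. For minor-free classes the argument uses constant-lightness spanners at some fixed stretch; for general graphs we cannot hope for bounded-stretch, constant-lightness spanners, so instead we pair the blocking-spanner connection with a logarithmic-stretch, (essentially) constant-lightness spanner, absorbing the extra stretch into the competitive ratio. Setting $\delta=\log n$ is exactly what balances these two regimes.

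The first step is to invoke the spanner-exploration connection in a form that is sensitive to how $\delta$ enters the bound. Morally, this should read: \emph{the competitive ratio of $\blocking_\delta$ on $G$ is at most $O(\delta)$ times the lightness of a spanner of $G$ whose stretch is a function of $\delta$.} This is precisely the quantitative statement that, when $\delta$ is constant and the class admits constant-lightness spanners at that stretch, gives Theorem~\ref{thm:explorationminor}. For the present theorem, what matters is that the factor multiplying the lightness grows only polynomially (ideally linearly) in $\delta$, so that plugging in $\delta=\log n$ yields the claimed $O(\log n)$.

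The second step is to feed the connection a spanner construction that applies to arbitrary weighted graphs. By the classical Erdős-girth-type results (the regime of Chechik and Wulff-Nilsen referenced in the introduction, together with their weighted-graph analogues), every $n$-vertex weighted graph admits, for each integer $k\ge 1$, a spanner of stretch $2k-1$ and lightness $n^{O(1/k)}$. Choosing $k=\Theta(\log n)$ forces the stretch to be $O(\log n)$ and the lightness to be $O(1)$ (up to polylogarithmic factors), matching the scaling $\delta=\log n$ of the algorithm. Combining this with the linear-in-$\delta$ inflation from the connection gives a competitive ratio of $\tilde{O}(\log n)$, hence the desired $O(\log n)$ bound.

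The main obstacle will be controlling the constants in the blocking-spanner connection when $\delta$ is no longer fixed. The original analysis of $\blocking_\delta$ by Megow et al.~was carried out for constant $\delta$, so one has to revisit it and carefully track every dependence on $\delta$, ruling out hidden factors of the form $n^{\Omega(1)}$ or $2^{\delta}$ that would destroy the final bound. A secondary, more mechanical obstacle is parameter matching between the stretch required by the connection (of the form $1+\delta$, or some other explicit function of $\delta$) and the stretch $2k-1$ produced by the spanner construction; once the main $\delta$-dependence is established, this is routine bookkeeping in the choice of $k$.
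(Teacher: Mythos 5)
Your proposal is essentially the paper's own proof. The paper establishes the connection in Theorem~\ref{mainthm}: for any monotone class $\mathcal{G}$ and any $\delta=\delta(n)>0$, the algorithm $\blocking_\delta$ is $2(\delta+2)\cdot\optspan(\mathcal{G})$-competitive, where $\optspan$ is the optimal lightness of a $(1+\delta)$-spanner — exactly the \emph{linear-in-$\delta$ factor times lightness-at-stretch-$1+\delta$} form you anticipate. It then feeds in the Chechik--Wulff-Nilsen bound (Corollary~\ref{cor:blocking:delta:general}) and chooses $k\in\Theta(\log n)$ and $\epsilon\in(0.5,1)$ with $(2k-1)(1+\epsilon)=\log n$, so that $n^{1/k}=O(1)$ and the competitive ratio is $O(\log n\cdot n^{1/\log n})=O(\log n)$. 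Two small cleanups to your account: the lightness at $k=\Theta(\log n)$ is genuinely $O(1)$, not merely $O(1)$ up to polylog factors, so the final bound is $O(\log n)$ outright rather than $\tilde{O}(\log n)$; and the "parameter matching" you flag is handled by the simple observation that a $(2k-1)(1+\epsilon)$-spanner is a fortiori a $(1+\delta)$-spanner when $\delta=(2k-1)(1+\epsilon)$.
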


This shows that $\blocking_{\log (n)}$ achieves the best previously known competitiveness. We complement this with the following lower bounds.

\begin{theorem}\label{thm:blockinglower}
The competitive ratio of $\blocking_\delta$, where $\delta=\delta(n)>0$, is at least
\begin{enumerate}[a)]
\item $\Omega(\log (n)/ \log (\log (n)))$,\label{thmpart:general:lowerbound}
\item $\Omega(\log (n))$ for $\delta \in o(\log (n)/ \log \log (n))$ as well as for~$\delta \in \Omega(\log (n))$.\label{thmpart:specific:delta:lowerbound}
\end{enumerate}
\end{theorem}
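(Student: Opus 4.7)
The plan is to exhibit adversarial weighted graph families tailored to the parameter $\delta$, each forcing $\blocking_\delta$ to pay significantly more than $\opt$. The common strategy is to build a hierarchical (tree-like) instance in which the algorithm, guided by its local blocking criterion, is repeatedly misled into descending into the ``wrong'' subtree and must backtrack, incurring a constant-factor overhead at each level of the hierarchy.

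For the universal lower bound in part~(a), I would construct a rooted tree of depth $k$ and branching factor $b$ on $n$ vertices with $b^k \approx n$, together with auxiliary shortcut edges, starting the agent at the root. The edge weights at level $i$ are scaled geometrically by some $\alpha^i$, so that the blocking criterion of $\blocking_\delta$ can correctly distinguish the ``heavy'' branch from the ``light'' ones only within a bounded range of scales. Setting $b = k = \Theta(\log n / \log \log n)$ then guarantees that, regardless of the value of $\delta(n)$, a constant fraction of the $k$ levels induce a wrong choice, each contributing a constant multiplicative overhead, for total ratio $\Omega(k) = \Omega(\log n / \log \log n)$.

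For part~(b) in the small-$\delta$ regime ($\delta \in o(\log n / \log \log n)$), the blocking tolerance is so restrictive that even binary branching $b = 2$ suffices to mislead $\blocking_\delta$ at each level. I would therefore use the same hierarchy with $b = O(1)$ and depth $k = \Omega(\log n)$, yielding the $\Omega(\log n)$ bound. In the large-$\delta$ regime ($\delta \in \Omega(\log n)$), the hierarchy is ineffective because the algorithm becomes too permissive about detours. Instead, I would use a ``spider'' construction: a central hub with $m = \Theta(\log n)$ pendant paths of geometrically increasing lengths, together with short shortcut edges between consecutive path tips. Here $\opt$ visits the paths in the nested short-first order at cost $\Theta(n)$, whereas the high tolerance forces $\blocking_\delta$ to commit to the longest available path first, then the next longest, and so on, incurring an extra $\Theta(m) = \Theta(\log n)$ factor.

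The main obstacle will be constructing the universal hierarchy for part~(a): one instance per $n$ has to defeat \emph{every} $\delta(n) > 0$ simultaneously. The geometrically scaled edge weights provide enough scale diversity that any particular $\delta$ aligns correctly with only a bounded window of levels, but balancing the branching factor, depth, scaling factor $\alpha$, and shortcut weights so that $\Omega(k)$ levels genuinely induce errors while keeping $b^k \le n$ and keeping $\opt$ small will require careful tuning. Once this construction is in place, the refinements in part~(b) amount to specializing the parameters to the two extreme regimes of $\delta$.
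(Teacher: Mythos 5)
Your proposal has a conceptual misunderstanding that creates a false obstacle, and the concrete constructions you sketch diverge substantially from what the paper actually does.

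\textbf{Part (a) does not require a universal instance.} You frame part (a) as needing ``one instance per $n$ that defeats every $\delta(n)>0$ simultaneously,'' and correctly identify this as the main obstacle — but the theorem does not ask for this. The competitive ratio is a per-algorithm quantity, so for each fixed choice of the function $\delta(\cdot)$ you are free to pick a different family of hard instances depending on $\delta$. The paper exploits exactly this freedom: it proves part (b) first via three separate constructions covering three $\delta$-regimes (small, medium, extreme), and then derives part (a) by observing that for $\delta$ outside the small and medium regimes, part (b) already gives $\Omega(\log n) \supseteq \Omega(\log n/\log\log n)$, while in the remaining window of $\delta$ the small-$\delta$ bound $\Omega(\delta\, n^{1/(\delta+4)})$ still yields $\Omega(\log n/\log\log n)$. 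No single instance needs to defeat all $\delta$.

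\textbf{The concrete constructions differ.} For the small-$\delta$ regime, the paper does not build a fresh binary hierarchy but reuses the lower-bound gadget of Megow et al.\ (\cref{obs:lowerconstruction}), adapted to non-constant $\delta$ via a reparameterization of the ``blocking paths,'' to get $\Omega(\delta\, n^{1/(\delta+4)})$ whenever $\delta^{2\delta+8}=o(n)$. For the large-$\delta$ regime (\cref{lem:deltalower}), the paper's construction is a \emph{tree}: a weight-$1$ path of $2k$ vertices, with light unit-weight pendant edges on the first $k$ vertices and heavy pendant edges of weight $(k+1)/(\delta+1)$ on the last $k$. The algorithm first traverses the path, then each heavy edge unblocks one at a time, forcing $\Theta(k)$ back-and-forth trips of length $\Theta(k)$, yielding a ratio of $\Omega(\delta)$ directly. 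Your spider-with-shortcuts is a different gadget, and you have not argued why $\blocking_\delta$ would ``commit to the longest available path first'': with identical first-edge weights on the pendant paths, the blocking criterion does not distinguish between them, so the adversary cannot steer the agent in the way you describe without additional weight structure. Finally, for $\delta \geq (n-4)/4$ the paper does not build a new instance at all but observes (\cref{obs:hdfs}) that $\blocking_\delta$ mimics the hDFS algorithm on a known $\Omega(\log n)$ hard instance.

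\textbf{Summary.} Your high-level intuition (hierarchical misdirection, backtracking overhead) is the right flavor, but the proposal's structure — a single universal instance for part (a), a binary hierarchy plus a spider for part (b) — is materially different from the paper's proof and, as written, leaves the key mechanisms (why the algorithm is misled, how the $\delta$-dependence feeds into the ratio) unverified. The paper's route is more modular and avoids your hardest subproblem entirely: build three $\delta$-specific constructions for part (b), and obtain part (a) as a corollary by case analysis on $\delta$.
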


In particular, this shows that there is no $\delta$ such that $ \blocking_\delta$ is constant-competitive, but it remains open, whether there is a choice of $\delta$ for which the algorithm is $o(\log (n))$-competitive.

Next, we exploit the connection between spanners and exploration in reverse to derive the existence of good spanners in bounded genus graphs.

\newcounter{restatecounter}
\setcounter{restatecounter}{\value{definition}}

\begin{restatable}{theorem}{thmgenusspanner}\label{thm:genusspanner}
For every $\epsilon>0$, the greedy $(1+\epsilon)$-spanner of a graph of genus $g$ has lightness at most $\bigl( 1+\frac{2}{\epsilon} \bigl) \bigl(1+\frac{2g}{1+\epsilon} \bigl)$.
\end{restatable}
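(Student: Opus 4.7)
The plan is to combine the defining cycle property of the greedy $(1+\epsilon)$-spanner with Euler's formula for graphs of genus~$g$. Let $H$ denote the greedy $(1+\epsilon)$-spanner of $G$ and let $T \subseteq H$ be an MST of $G$; the greedy procedure always includes $T$, because at the moment an MST edge is considered it connects two previously disjoint components, so the current shortest-path distance between its endpoints is infinite and the greedy criterion is satisfied. The main structural property to establish is that every cycle $C$ in $H$ satisfies $w(C) > (2+\epsilon)\, w_{\max}(C)$, where $w_{\max}(C)$ is the weight of the heaviest edge of~$C$. This follows by looking at the last edge of~$C$ inserted during the greedy procedure: its inclusion certifies that the remaining portion of~$C$, already present in the spanner at that moment, has weight strictly greater than $(1+\epsilon)\, w_{\max}(C)$. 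Applied to the fundamental cycle of any non-tree edge $f \in H$, this yields $w(P_f) > (1+\epsilon)\, w(f)$, where $P_f \subseteq T$ is the $T$-path between the endpoints of~$f$ and $f$ is heaviest on the cycle by the MST cycle property.

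To convert this local inequality into the global lightness bound, I would bucket the non-tree edges of~$H$ by weight into geometric scales and, for each scale, pass to an auxiliary minor of~$H$ obtained by deleting all sufficiently heavy edges and contracting all sufficiently light tree edges. Edge contractions do not increase the genus, so the auxiliary minor still has genus at most~$g$, and the cycle inequality transfers to give a girth lower bound close to $2+\epsilon$ with the right parameters. Euler's formula for a simple graph of genus~$g$ with girth at least~$k$, namely $m \leq \tfrac{k}{k-2}(n-2+2g)$, then controls the number of non-tree edges contributed by each scale. Summing across scales and using the charge $w(f) < w(P_f)/(1+\epsilon)$ redistributes the weight of $H \setminus T$ onto~$T$. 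In the planar specialization the $2g$ term vanishes and one recovers the tight bound $1 + 2/\epsilon$; the $2g$ term surviving in Euler's formula produces the additional multiplicative factor $1 + 2g/(1+\epsilon)$, yielding the stated product.

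The main obstacle I expect is tuning the bucket parameter and contraction threshold precisely, so that the constants produced by the Euler estimate match the stated bound exactly rather than a looser variant. A secondary technical point is verifying that the greedy cycle inequality descends cleanly to the minor obtained after contracting the very light tree edges; this reduces to the observation that contracting edges whose total weight is negligible at the scale under consideration changes cycle weights only by a lower-order amount, and hence preserves the girth lower bound. I would follow the scale-by-scale analysis of Megow et al.~\cite{megow} for exploration on bounded-genus graphs, reinterpreted through the exploration-to-spanner connection established earlier in this paper, and read off the stated lightness bound.
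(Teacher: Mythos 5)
Your approach is genuinely different from the paper's, and it has gaps that prevent it from delivering the stated bound.

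The paper's proof is a direct topological argument. It fixes an embedding of the greedy spanner $H$ on a surface of genus $g'\le g$, then constructs (Lemma~\ref{lem:constructionD}) a subgraph $D\supseteq\mst$ with at most $2g'$ extra edges whose removal cuts the surface into a single topological disk. Each extra edge is a non-tree edge of $H$ and hence, by the cycle property, has weight at most $w(\mst)/(1+\epsilon)$, giving the clean structural bound $w(D)\le\bigl(1+\tfrac{2g}{1+\epsilon}\bigr)w(\mst)$. Inside the disk, the same facial-cycle assignment used in the planar proof (Lemma~\ref{lem:facialcycles}) applies with $D$ in place of $\mst$, so Lemma~\ref{lem:spannercomb} gives $w(H)\le\bigl(1+\tfrac{2}{\epsilon}\bigr)w(D)$, and the two factors multiply. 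Your proposal instead replaces the facial-cycle counting with a weight-scale bucketing followed by a girth/Euler edge count on auxiliary minors.

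There are three concrete problems with the bucketing route. First, the target bound $\bigl(1+\tfrac{2}{\epsilon}\bigr)\bigl(1+\tfrac{2g}{1+\epsilon}\bigr)$ is exact, and already for $g=0$ it recovers the tight planar bound $1+2/\epsilon$ of Althöfer et al. Geometric bucketing, contraction thresholds, and the conversion from the weighted cycle inequality $w(C)>(2+\epsilon)w_{\max}(C)$ to an unweighted girth lower bound each introduce constant-factor slack, so this machinery produces a bound of the right \emph{shape} (say $O(1/\epsilon)$ with an additive $O(g)$ overhead) but not the exact constants; you acknowledge this yourself as the ``main obstacle,'' and it is not a matter of tuning --- the slack is inherent to the method. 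Second, the step where the additive $2g$ term in Euler's inequality $m\le\tfrac{k}{k-2}(n-2+2g)$ is supposed to become the multiplicative factor $1+\tfrac{2g}{1+\epsilon}$ is asserted but not derived. In the paper, that factor has a transparent origin: $D$ contains exactly $\mst$ plus at most $2g$ edges of weight at most $w(\mst)/(1+\epsilon)$ each. Nothing in the Euler count produces this structure; summing an additive $2g$ contribution over all weight scales does not obviously converge to a term proportional to $w(\mst)$ without further argument, and certainly not with the precise coefficient $2g/(1+\epsilon)$. Third, the attribution is off: Megow et al.~\cite{megow} do not use scale-bucketing in their bounded-genus analysis --- they use precisely the cut-the-surface-to-a-disk argument that the present paper refines. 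So the authority you cite for the approach actually supports the paper's route, not yours.

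The parts of your sketch that are correct --- the cycle property $w(C)>(2+\epsilon)w_{\max}(C)$ (Observation~\ref{obs:longcycle}) and the fact that the greedy spanner contains an MST (Observation~\ref{greedyMST}) --- are necessary ingredients in any proof, but the bridge from them to the stated lightness needs the topological disk decomposition, not Euler's formula, to hit the exact constants.
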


Prior to our work, the best known bound was due to Grigni \cite{grigni} who showed that every graph of genus $g\geq 1$ contains a $(1+\epsilon)$-spanner of lightness $1+ \frac{12g-4}{\epsilon}$. Moreover,
it is already known that planar graphs, i.e., graphs of genus 0, contain $(1+\epsilon)$-spanners of lightness $1+\frac{2}{\epsilon}$ and that this is best possible \cite{althoefer}. This means that \cref{thm:genusspanner} gives a tight bound in the case $g=0$ and extrapolates this bound to graphs of larger genus.

\subparagraph*{Related work}
Kalyanasundaram and Pruhs~\cite{pruhs94} introduced the online graph exploration problem and gave a constant-competitive algorithm for planar graphs. 
Megow, Mehlhorn and Schweitzer~\cite{megow} revisited the algorithm, addressed some technical intricacies, and proposed their reinterpretation $\blocking_\delta$, which we also consider in this paper. 
They expanded the result by  Kalyanasundaram and Pruhs and showed that the algorithm is constant-competitive on bounded genus graphs. Moreover, they suggested a new algorithm hDFS and showed that it is constant-competitive on graphs with a bounded number of different weights and $O(\log (n))$-competitive on general graphs.

Another very natural approach for exploration is the \emph{Nearest Neighbor} algorithm, which, in each step, explores the unvisited vertex nearest to the current location. This algorithm has been studied extensively as a TSP heuristic. Rosenkrantz, Stearns and Lewis were able to show that its competitive ratio is $\Theta(\log (n))$ \cite{NN}. It turned out that the lower bound of $\Omega(\log (n))$ is already achieved on unweighted planar graphs \cite{hurkens} and on trees \cite{fritsch}.
Eberle et al.~\cite{eberle} revisited the algorithm with learning augmentation.

In addition to planar and bounded genus graphs, the exploration problem has been studied on many more graph classes. For example, Miyazaki, Morimoto and Okabe were able to show that the competitive ratio of the exploration problem is  $(1+\sqrt{3})/2$ on cycles and 2 on unweighted graphs. Other examples of such graph classes are tadpole graphs \cite{tadpole}, unicyclic graphs \cite{fritsch}, and cactus graphs \cite{fritsch}.

Currently, the best known lower bound for the graph exploration problem is 10/3 which was shown by Birx, Disser, Hopp, and Karousatou~\cite{birx}. Their construction builds on a previously known lower bound of 2.5 shown by Dobrev, Královič, and Markou \cite{dobrev}. Since the construction by Birx et al.~is planar, the lower bound of 10/3 even holds when the problem is restricted to planar graphs.

Several other settings of the exploration problem have been studied, such as exploration on directed graphs \cite{albers,deng, foerster, fleischer} or exploration with a team of agents \cite{dereniowski,DisserHackfeldKlimm/19, disserlower}. Another problem which is closely related to graph exploration is online TSP, where a single agent has to serve requests appearing over time in a known graph \cite{bjelde, bonifaci}.

Through the connection with spanners, we are concerned with the existence of light spanners for a given stretch.
Examples of graph classes where the worst-case lightness does not depend on the number of vertices include planar graphs \cite{althoefer}, bounded genus graphs~\cite{grigni}, apex graphs \cite{grigniapex}, bounded pathwidth graphs \cite{grignipathwidth}, bounded treewidth graphs \cite{demainetreewidth}, and minor-free graphs~\cite{minorspanner}.
Our results rely on the existence of light spanners for minor-free graphs~\cite{minorspanner} and improve on the lightness for bounded genus graphs.
In particular, we study the lightness of the so-called greedy spanner~\cite{althoefer} for graphs of bounded genus. 
It was shown by Filtser and Solomon~\cite{filtser} that this spanner construction is existentially optimal for every class of graphs closed under taking subgraphs, which means that the optimal lightness guarantee on any such class is achieved by the greedy spanner.

Light and sparse spanners have applications in various fields. Most importantly, spanners were used to give polynomial-time approximation schemes (PTAS) for the travelling salesperson problem for various graph classes \cite{arora, borradaile, minorspanner}. 
Note that the difference between approximations for TSP and online exploration is that, in our setting, the tour is computed on-the-fly. 
Indeed, in comparison to our online setting, we desire a constant approximation for an arbitrary constant, which in the TSP setting is easily obtained by traversing a minimum spanning tree twice. On the other hand, in the online setting, we are not concerned with efficiency of the algorithms which is crucial in the TSP setting.

Other fields of application of spanners include distributed systems \cite{distcomp}, routing \cite{routing}, or computational biology \cite{biology}.

\section{The online graph exploration problem on minor-free graphs}

In this section, we prove new upper bounds for $\blocking_\delta$ on  $H$-minor-free graphs (Theorem~\ref{thm:explorationminor}) and for general graphs (Theorem \ref{thm:explorationlogn}). To this end, we begin by introducing the algorithm $\blocking_\delta$ proposed by Megow et~al.~\cite{megow} based on the work of Kalyanasundaram and Pruhs \cite{pruhs94}.

\subsection{The algorithm $\blocking$}\label{sec:blocking}

During the execution of an online graph exploration algorithm, a vertex is \emph{explored} if it has been visited by the agent. A neighbor of an explored vertex is a \emph{learned} vertex.
An edge is a \emph{boundary edge} if exactly one of its endpoints is explored. 
By convention, we denote boundary edges $e=(u,v)$ such that $u$ is explored and $v$ is unexplored. 
A path is \emph{internally explored} if each of its internal vertices is explored.
Given two learned vertices $x$ and $y$, we set the distance $d(x,y)$ to be the length of a shortest internally explored path linking $x$ with $y$. 
In particular, the distance may decrease during execution.
\begin{definition}[Kalyanasundaram and Pruhs \cite{pruhs94}]
Given some $\delta>0$, we say that a boundary edge $e=(u,v)$ is \emph{$\delta$-blocked} if there is another boundary edge $e'=(u',v')$ such \linebreak that ${w(e')<w(e})$ and $\dist (u, v') \leq (1+\delta)w(e)$. 
\end{definition}

The rough idea of $\blocking$ is to perform a depth-first-traversal while ignoring all blocked edges. Whenever a previously blocked edge turns unblocked, the agent moves to and explores one such edge, and initiates a DFS-traversal from its new position. 
$\blocking$ is formally specified in~\cref{alg:blocking}. 
It is executed on an undirected, weighted, connected, and initially unexplored graph $G=(V,E,w)$ and takes as input a vertex $v$ of $G$, denoting the current position of the agent. The algorithm follows a recursive DFS-like structure and the input of the initial invocation is the start vertex.

\begin{algorithm}[H]
\caption{$\blocking_\delta(v)$ \cite{pruhs94, megow} }\label{alg:blocking}
\While{there is a boundary edge $e=(y,x)$ that is not $\delta$-blocked and such that $y=v$ or~$e$ was previously blocked by some edge $(u,v)$}{
traverse a shortest internally explored path from $v$ to $y$\\
traverse $e$\\
$\blocking_\delta(x)$\\
traverse a shortest internally explored path from $x$ to $v$}
\end{algorithm}

Observe that the algorithm is correct, i.e., every vertex is explored: Assume, for the sake of contradiction, that some vertex remains unexplored when the algorithm terminates, i.e., there are still boundary edges. 
Let $e=(u,v)$ be a boundary edge of minimum weight. Then, $e$ is not $\delta$-blocked. Therefore, either the exploration of $u$ should have triggered the exploration of~$v$, or $v$ should have been explored at the last point in time the edge turned unblocked.

\subsection{Key properties of $\blocking$}

Throughout the remainder of Section 2, let $G=(V,E,w)$ be a graph, $n=|V|$ be its number of vertices, $v$ the given start vertex of $G$, and $\delta=\delta(n)>0$. We analyze the performance of $\blocking_\delta$ on $G$, i.e., we estimate its total cost $\cost(G,v, \delta)$. For this, let $B$ be the set of boundary edges taken by $\blocking_\delta$, i.e., the edges traversed during the execution of line~3.

Note that the total cost of the offline optimum is bounded from below by the weight of a minimum spanning tree $w(\mst)$ and from above by~$2w(\mst)$. 
That is, to show that $\blocking_\delta$ is $\rho$-competitive, it suffices to show $\cost(G,v,\delta)\leq \rho \cdot w(\mst)$. 

\begin{observation}[Megow et al. \cite{megow}]\label{obs:boundP}
We have $\cost (G,v, \delta) \leq 2 (\delta +2) w(B)$.
\end{observation}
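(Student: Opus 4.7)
The plan is to break the total cost into the individual iterations of the while loop (across all recursive invocations of $\blocking_\delta$) and charge each iteration's costs to the single boundary edge $e\in B$ that it traverses in line~3. Since every edge of $B$ is traversed by exactly one such iteration, it suffices to show that each iteration contributes at most $2(\delta+2)w(e)$ to the total, and then sum over $e\in B$.

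Fix one iteration, write $e=(y,x)$ for the boundary edge it processes, and let $v$ denote the agent's position at its start. The while condition forces two subcases: either $y=v$, or $e$ was $\delta$-blocked at an earlier moment by an edge of the form $(u,v)$. In the first subcase, line~2 is empty, line~3 costs $w(e)$, and line~5 costs at most $w(e)$ because, immediately after line~3, the edge $e$ itself constitutes an internally explored $x$--$v$ path; so this iteration contributes at most $2w(e)$. In the second subcase, the definition of $\delta$-blocking provides, at the time of blocking, an internally explored path from $y$ to $v$ of length at most $(1+\delta)w(e)$. I would invoke monotonicity of $\dist$ to port this bound to the current moment and conclude $\dist(v,y)\leq(1+\delta)w(e)$, so line~2 contributes at most $(1+\delta)w(e)$. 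Line~3 costs $w(e)$, and for line~5 the concatenation of~$e$ with a shortest internally explored $y$--$v$ path yields an internally explored $x$--$v$ path of length at most $(2+\delta)w(e)$. Summing these gives $(4+2\delta)w(e) = 2(\delta+2)w(e)$, which dominates the first-case bound and hence covers both subcases.

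The only delicate point is the monotonicity of $\dist$: the set of explored vertices only grows over time, so every path that was internally explored at an earlier instant remains internally explored now, which is exactly what legitimises transferring the blocking inequality from the blocking time (when $v$ was still unexplored) to the current iteration (when $v$ is an explored vertex and the agent's position). Once this monotonicity is formally noted, the remaining arithmetic is routine, and summing the per-edge bound over $e\in B$ yields $\cost(G,v,\delta)\leq 2(\delta+2)w(B)$.
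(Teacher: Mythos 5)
Your proposal is correct and follows essentially the same charging argument as the paper: it charges the cost of lines 2, 3, and 5 of each iteration to the boundary edge traversed in line 3, bounding them by $(1+\delta)w(e)$, $w(e)$, and $(2+\delta)w(e)$ respectively. The only difference is cosmetic: you spell out the monotonicity of the internally-explored distance (which the paper leaves implicit) to justify transferring the blocking inequality from the moment of blocking to the current iteration, a worthwhile clarification but not a different proof.
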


\begin{proof} 
We charge all cost incurred in lines 2,3, and 5 to the corresponding boundary edge~$e \in B$. 
Note that the cost in line 2 is at most $(1+\delta)w(e)$, because either we have $y=v$ such that $d_G(v,y)=0$, or $e$ was blocked by an edge $(u,v)$, which implies $d_G(y,v) \leq (1+\delta) w(e)$.  
The cost in line~3 is $w(e)$ and the cost in line 5 is at most the sum of the cost in lines~2 and~3. Therefore, each edge $e$ in $B$ is charged at most $2(\delta+2)w(e)$.
\end{proof}

\newcommand{\mstb}{\mathrm{MST}_B}

In our subsequent analysis, we will frequently use a minimum spanning tree with a particular property. For this, in what follows, let $\mstb$ be a minimum spanning tree of~$G$ that maximizes the number of edges in $\mstb \cap B$.
As pointed out in~\cite{megow}, cycles in $B\cup \mstb$ are long relative to the weight of the edges they contain. 
Specifically, the following holds.
\footnote{The assertion of Lemma~\ref{lem:blockingcycles} implies Claim 1 in \cite{megow}, which only concerns edges not in the minimum spanning tree. 
However, there is a subtle flaw in the proof of Claim 1 in \cite{megow}. 
In fact, in that proof, it is not clear that when we replace the edge~$e'$ with an edge of the fixed MST, we again obtain a minimum spanning tree. In any case, the argument above rectifies this. }

\begin{lemma}\label{lem:blockingcycles}
Let $C$ be a cycle in $B \cup \mstb$ and $e$ be an edge of $C$. Then, 
\begin{equation*}
 w(C \setminus \{e\})> (1+\delta) w(e).
 \end{equation*}
\end{lemma}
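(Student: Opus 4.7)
The plan is to reduce the inequality to the maximum-weight edge of $C$ and then derive a contradiction from the not-blocked property of $\blocking_\delta$. My first step is to establish that every maximum-weight edge of $C$ must lie in $B$. Suppose $e \in C$ has maximum weight but $e \in \mstb \setminus B$. Then removing $e$ splits $\mstb$ into two components $T_1, T_2$; since $C \setminus \{e\}$ is a path between the endpoints of $e$, it must contain an edge $f$ crossing the cut $(T_1,T_2)$. As $e$ is the only $\mstb$-edge crossing this cut, we get $f \in B \setminus \mstb$. Because $w(f) \leq w(e)$ by maximality of $w(e)$, the tree $(\mstb \setminus \{e\}) \cup \{f\}$ has weight at most $w(\mstb)$ and is therefore again an MST, but it contains strictly more $B$-edges than $\mstb$, contradicting the choice of $\mstb$. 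Since for every $e \in C$ we have $w(C \setminus \{e\}) \geq w(C \setminus \{e^*\})$ and $(1+\delta) w(e) \leq (1+\delta) w(e^*)$ whenever $e^*$ is maximum-weight in $C$, it now suffices to prove the claim for $e = e^*$.

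Next, I will choose $e^* \in C$ to be a maximum-weight edge that is, among these, traversed by $\blocking_\delta$ as late as possible; this is legitimate because every maximum-weight edge of $C$ lies in $B$ by the first step. Let $t^*$ denote the moment just before $e^* = (u^*, x^*)$ is traversed in line 3, so that at $t^*$ the vertex $u^*$ is explored, $x^*$ is unexplored, and the while-condition certifies $e^*$ to be not $\delta$-blocked. Assuming for contradiction that $w(P) \leq (1+\delta) w(e^*)$ for $P := C \setminus \{e^*\}$, I follow $P$ from $u^*$ and let $e' = (y', x')$ be its first transition edge, i.e., the edge with $y'$ explored, $x'$ unexplored, and every vertex on the initial segment from $u^*$ to $y'$ explored at time $t^*$ (this exists because $u^*$ is explored and $x^*$ is not). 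The subpath from $u^*$ to $x'$ is then internally explored, so $d(u^*, x') \leq w(P) \leq (1+\delta) w(e^*)$, and $e'$ is a boundary edge at $t^*$.

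The decisive step is the case split on the weight of $e'$, for which maximality gives $w(e') \leq w(e^*)$. If $w(e') < w(e^*)$, then $e'$ would $\delta$-block $e^*$, contradicting the while-loop selection of $e^*$. Otherwise $w(e') = w(e^*)$, so $e'$ is itself maximum-weight in $C$ and hence in $B$ by the first step. But $x'$ is unexplored at time $t^*$, so $e'$ has not yet been traversed, meaning its traversal time is strictly later than $t^*$; this contradicts the choice of $e^*$ as the latest-traversed maximum-weight edge of $C \cap B$. The main obstacle I expect is precisely this equal-weight case, which is what forces the careful tie-breaking in the selection of $e^*$ and simultaneously uses both the $|\mstb \cap B|$-maximality of $\mstb$ (to place equal-weight maximum edges in $B$) and the sequential traversal order of $\blocking_\delta$.
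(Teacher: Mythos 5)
Your proof is correct and follows essentially the same strategy as the paper: first use the maximality of $|\mstb\cap B|$ to place every maximum-weight edge of $C$ in $B$, then pick the maximum-weight edge traversed last in line~3, walk along $C\setminus\{e^*\}$ from the explored endpoint to the first boundary edge $e'$, rule out $w(e')=w(e^*)$ via the last-traversed choice, and conclude from the not-$\delta$-blocked condition. The only cosmetic difference is that you phrase the final step as a proof by contradiction with the hypothesis $w(P)\leq(1+\delta)w(e^*)$, while the paper argues directly that $w(C\setminus\{e\})\geq d(u,v')>(1+\delta)w(e)$; these are logically equivalent.
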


\begin{proof}
It suffices to show the assertion for an edge of maximum weight in $C$. 
We first show that this edge must be in $B$, i.e., $\argmax \{w(e)\colon e \in C \} \subset B$: 

Assume otherwise and let $e=(u,v) \in \argmax \{w(e)\colon e \in C \} \cap (\mstb \setminus B)$. 
Removing~$e$ from $\mstb$ separates $\mstb$ into two connected components. 
In particular, $u$ and $v$ are in different components. 
Start walking in $C \setminus \{e\}$ from $u$ to $v$ and let $e'$ be the first edge that leads from $u$'s connected component in $\mstb \setminus \{e\}$ to $v$'s connected component. Then, $e' \in B \setminus \mstb$ and by maximality of $e$, we have $w(e')\leq w(e)$. 
Therefore, replacing~$e$ by~$e'$ in $\mstb$ gives another spanning tree of weight at most $w(\mstb)$. This new spanning tree has one more edge in common with $B$ than $\mstb$. This contradicts the choice of $\mstb$, so that we can assume from now on  $\argmax \{w(e)\colon e \in C \} \subseteq B$, i.e., every edge in $\argmax \{w(e)\colon e \in C \}$ is \emph{charged}, i.e., the edge is traversed in some exectution of line 3 of the algorithm.

Let $e=(u,v)$ be the edge in $\argmax \{w(e)\colon e \in C \}$ that is charged last. At the time $e$ is traversed, it is a boundary edge, so that $u$ is explored but $v$ is not yet explored. Start walking in $C \setminus \{e\}$ from $u$ to $v$ and let $e'=(u',v')$ be the first edge leading from an explored vertex $u'$ to an unexplored vertex $v'$, i.e., $e'$ is another boundary edge in $C$ (cf.~\cref{fig:blockingcycles}).

\setlength{\belowcaptionskip}{-10pt}
\begin{figure}
\begin{center}
\begin{tikzpicture}[scale=0.9]
\node (u) at (0,0) [circle, draw, inner sep=0, minimum size=12pt] {$u$};
\node (v) at (1,0) [circle, draw, inner sep=0, minimum size=12pt, green!50!black, thick] {\color{black}$v$};
\draw[-, thick, blue] (u) to node [below] {\color{black}$e$} (v);
\node (u1) at (0,2.5) [circle, draw, inner sep=0, minimum size=12pt] {$u'$};
\node (v1) at (1,2.5) [circle, draw, inner sep=0, minimum size=12pt, green!50!black, thick] {\color{black}$v'$};
\draw[-, thick, blue] (u1) to node [below] {\color{black}$e'$} (v1);
\node (A) at (-1, 0.5)  [circle, draw, inner sep=0, minimum size=12pt] {};
\node (B) at (-1, 2)  [circle, draw, inner sep=0, minimum size=12pt] {};
\draw[-] (u) to (A);
\draw[-] (B) to (u1);
\draw[dashed, bend left=20] (A) to (B);
\draw[dashed, bend left=70] (v1) to (v);
\end{tikzpicture}
\end{center}
\caption{Illustration of \cref{lem:blockingcycles}: The black vertices are explored and the green vertices ($v$~and~$v'$) are unexplored. The blue edges ($e$ and $e'$) are boundary edges.}\label{fig:blockingcycles}
\end{figure}
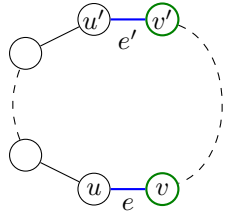

Next, we show that $w(e')<w(e)$: Assume otherwise. By maximality of $e$, this means $w(e')=w(e)$ so that $e' \in \argmax \{w(e)\colon e \in C \}$. But then, we also have $e' \in B$. This contradicts the fact that $e$ is the edge in $\argmax \{w(e)\colon e \in C \}$ that is charged last.

Summing up, we have shown the following facts: Upon exploration of $e=(u,v)$, there is another boundary edge $e'=(u',v')$ in $C$ with $w(e')<w(e)$. Since $e$ is not blocked, this implies 
\begin{equation*}
w(C \setminus \{e\}) \geq d(u,v') > (1 +\delta) w(e).  \qedhere
\end{equation*}
\end{proof}

\subsection{Connection to spanners}\label{sec:connectionspanner}

Next, we investigate how the performance of $\blocking_\delta$ is related to graph spanners. 
For this, note that \cref{lem:blockingcycles} can be reformulated as follows.

\begin{lemma}\label{lem:blockingspanner}
No proper subgraph of $B \cup \mstb$ is a $(1+\delta)$-spanner of $B \cup \mstb$.
\end{lemma}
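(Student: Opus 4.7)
The plan is to derive this essentially as a direct reformulation of \cref{lem:blockingcycles}. Let $H = B \cup \mstb$ and suppose, towards a contradiction, that some proper (spanning) subgraph $H'$ of~$H$ is a $(1+\delta)$-spanner of~$H$. Then there is an edge $e=(u,v)$ in $H$ but not in~$H'$, and the goal is to witness a pair of vertices (namely $u$ and $v$) whose distance in~$H'$ is too large compared to their distance in~$H$.

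First I would handle the case where $H'$ disconnects $u$ from $v$: then $d_{H'}(u,v) = \infty$ while $d_H(u,v) \leq w(e) < \infty$, so~$H'$ cannot be a finite-stretch spanner, contradiction. Otherwise, let $P$ be a shortest $u$--$v$ path in~$H'$. Since $P$ does not use $e$, the union $P \cup \{e\}$ is a cycle $C$ contained in~$H = B\cup \mstb$. Applying \cref{lem:blockingcycles} to~$C$ and the edge~$e$ yields
\begin{equation*}
w(P) \;=\; w(C \setminus \{e\}) \;>\; (1+\delta)\, w(e).
\end{equation*}
Combining this with the trivial bound $d_H(u,v) \leq w(e)$ (as $e$ itself is a $u$--$v$ edge in~$H$), I obtain
\begin{equation*}
d_{H'}(u,v) \;=\; w(P) \;>\; (1+\delta)\, w(e) \;\geq\; (1+\delta)\, d_H(u,v),
\end{equation*}
which contradicts $H'$ being a $(1+\delta)$-spanner of~$H$.

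There is no real obstacle here; the only things to be a little careful about are the two cases (disconnected vs.\ connected after removing $e$) and the fact that a shortest $u$--$v$ path in $H'$ is necessarily simple, so that $P\cup\{e\}$ is genuinely a cycle to which \cref{lem:blockingcycles} applies. If the paper's convention is that a spanner must contain every vertex, then a proper subgraph that qualifies must omit some edge, so the above argument covers every relevant case.
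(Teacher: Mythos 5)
Your proof is correct and is essentially the same argument the paper has in mind: the paper simply asserts that Lemma~\ref{lem:blockingspanner} is a reformulation of Lemma~\ref{lem:blockingcycles}, and you have spelled out that equivalence (a missing edge $e$ plus a shortest alternative path in the purported spanner forms a cycle to which Lemma~\ref{lem:blockingcycles} applies, yielding the stretch violation). The case split on whether removing $e$ disconnects $u$ from $v$, and the observation that a shortest path is simple so $P\cup\{e\}$ is genuinely a cycle, are exactly the right points to be careful about.
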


The lemma relates spanners to the behavior of $\blocking_\delta$.
However, we need to take note that the lemma applies to $B \cup \mstb$ rather than the original graph~$G$.
A \emph{monotone graph class} is a class of graphs $\mathcal{G}$ closed under taking subgraphs, i.e., if $G \in \mathcal{G}$ and $H$ is a subgraph of $G$, then also $H \in \mathcal{G}$. Given a graph $G$, we define $\optspan (G)$ as the minimum lightness of a $(1+\delta)$-spanner of $G$. Moreover, we set  
$\optspan ( \mathcal{G}):=\sup \{\optspan (G) : G \in \mathcal{G} \}$ to be the supremum over all graphs in~$\mathcal{G}$.

\begin{theorem}\label{mainthm}
For every  monotone graph class $\mathcal{G}$ and every $\delta=\delta(n)>0$, the algorithm $\blocking_\delta$ is $\left( 2(\delta+2)\cdot \optspan (\mathcal{G}) \right)$-competitive on $\mathcal{G}$.
\end{theorem}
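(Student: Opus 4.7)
The plan is to combine \cref{obs:boundP} and \cref{lem:blockingspanner} by viewing the subgraph $B \cup \mstb$ as its own witness for the lightness bound. From \cref{obs:boundP} we already have $\cost(G,v,\delta) \leq 2(\delta+2)\,w(B)$, and since $\opt(G) \geq w(\mst) = w(\mstb)$, it suffices to prove $w(B) \leq \optspan(\mathcal{G})\cdot w(\mstb)$.

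The main step is to apply the definition of $\optspan$ to the specific graph $H := B \cup \mstb$. First, because $\mathcal{G}$ is monotone and $H$ is a subgraph of $G$, we have $H \in \mathcal{G}$, so any lightness bound for $\mathcal{G}$ transfers to $H$. Second, I would verify that $\mstb$ is also a minimum spanning tree of $H$, not only of $G$: it is contained in $H$ and spans $V$, and any spanning tree of $H$ is in particular a spanning tree of $G$, so its weight is at least $w(\mstb)$. Hence $w(\mst(H)) = w(\mstb)$.

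At this point \cref{lem:blockingspanner} becomes the decisive ingredient: no proper subgraph of $H$ is a $(1+\delta)$-spanner of $H$, so $H$ itself is the unique (and therefore the lightest) $(1+\delta)$-spanner of $H$. This yields $\optspan(H) = w(H)/w(\mst(H)) = w(H)/w(\mstb)$, which by $H\in\mathcal{G}$ is bounded above by $\optspan(\mathcal{G})$. Chaining the inequalities gives
\begin{equation*}
\cost(G,v,\delta) \;\leq\; 2(\delta+2)\,w(B) \;\leq\; 2(\delta+2)\,w(H) \;\leq\; 2(\delta+2)\,\optspan(\mathcal{G})\,w(\mstb) \;\leq\; 2(\delta+2)\,\optspan(\mathcal{G})\,\opt(G),
\end{equation*}
which is the claimed competitive ratio.

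The argument is essentially a repackaging of the two earlier results, so I do not anticipate a serious obstacle. The only delicate point is the verification that $\mstb$ remains a minimum spanning tree after restricting to $H = B \cup \mstb$; this is where the particular choice of $\mstb$ (the MST maximizing $|\mstb \cap B|$, fixed right before \cref{lem:blockingcycles}) and the strictness of ``proper subgraph'' in \cref{lem:blockingspanner} pay off, ensuring that the lightness we compute for $H$ genuinely equals $w(H)/w(\mst)$ rather than a weaker ratio.
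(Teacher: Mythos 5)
Your argument is essentially the paper's own proof: you combine \cref{obs:boundP} with \cref{lem:blockingspanner}, observe that $B\cup\mstb\in\mathcal{G}$ by monotonicity, and use the fact that $B\cup\mstb$ is its own unique $(1+\delta)$-spanner to bound $w(B\cup\mstb)$ by $\optspan(\mathcal{G})\cdot w(\mstb)$. Your explicit check that $\mstb$ remains a minimum spanning tree of the subgraph $B\cup\mstb$ (so that the lightness is computed against the right denominator) is a detail the paper leaves implicit but is indeed needed and correct.
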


\begin{proof}
Let $G \in \mathcal{G}$. We have
\begin{equation}\label{eq:mainthm1}
\cost(G,v, \delta) \overset{\text{Obs \ref{obs:boundP}}}{\leq} 2(\delta+2) w(B)
\leq 2(\delta+2) w(B \cup \mstb).
\end{equation}
Since $B \cup \mstb$ is a subgraph of $G$, we have $B \cup \mstb \in \mathcal{G}$. By \cref{lem:blockingspanner}, the only $(1+\delta)$-spanner of $B\cup \mstb$ is $B\cup \mstb$ itself. Therefore, 
\begin{equation}\label{eq:mainthm2}
 w(B \cup \mstb) \leq  \optspan (B\cup \mstb) \cdot w(\mstb) \leq  \optspan (\mathcal{G}) \cdot w(\mstb).
 \end{equation} 
Combined, we obtain
\begin{equation*}
\cost(G,v, \delta) \overset{\eqref{eq:mainthm1}}{\leq} 2(\delta+2) w(B \cup \mstb) \overset{\eqref{eq:mainthm2}}{\leq} 2(\delta+2) \cdot \optspan (\mathcal{G}) \cdot w(\mstb). \qedhere
\end{equation*} 
\end{proof}

The theorem puts us in a position to leverage results on the lightness of spanners in order to draw conclusions regarding the competitive ratio of $\blocking_\delta$. 
For example, it has been shown that 
every planar graph contains a $(1+\delta)$-spanner of lightness at most $1+\frac{2}{\delta}$ \cite{althoefer}. 
Feeding this into \cref{mainthm}, we conclude that $\blocking_\delta$ is $2(\delta+2)(1+2/\delta)$-competitive on planar graphs. 
This agrees with the bound proven in~\cite{pruhs94}.
However, more generally, bounded genus graphs have light spanners. In fact, in Section~\ref{sec:genaral:to:bd:genus}, we show that every graph of genus at most $g$ contains a $(1+\delta)$-spanner of lightness at most $\left(1+\frac{2}{\delta} \right) \bigl( 1 + \frac{2g}{1+\delta}\bigl)$ (\cref{thm:genusspanner}). From this, we obtain the following.

\begin{corollary}\label{cor:blocking:delta:genus}
$\blocking_\delta$ is $2(\delta+2)\bigl(1+\frac{2}{\delta} \bigl) \bigl(1+\frac{2g}{1+\delta} \bigl)$-competitive on graphs of genus at most $g$.
\end{corollary}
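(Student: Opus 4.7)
The plan is to derive the corollary as a direct consequence of Theorems~\ref{mainthm} and~\ref{thm:genusspanner}. First I would verify that the class $\mathcal{G}_g$ of graphs of genus at most $g$ is monotone: any subgraph of a graph that embeds on a surface of genus $g$ embeds on the same surface (simply restrict the embedding to the remaining vertices and edges), so the property of having genus at most $g$ is preserved under taking subgraphs. This is the only non-calculational ingredient, and it is a standard topological observation.

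Next, I would apply Theorem~\ref{thm:genusspanner} with $\epsilon=\delta$ to conclude that every graph in $\mathcal{G}_g$ admits a $(1+\delta)$-spanner of lightness at most $\bigl(1+\frac{2}{\delta}\bigr)\bigl(1+\frac{2g}{1+\delta}\bigr)$, which immediately gives the bound
\[
\optspan(\mathcal{G}_g)\leq \bigl(1+\tfrac{2}{\delta}\bigr)\bigl(1+\tfrac{2g}{1+\delta}\bigr).
\]

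Finally, I would feed this bound into Theorem~\ref{mainthm} applied to the monotone class $\mathcal{G}_g$, obtaining competitive ratio $2(\delta+2)\cdot \optspan(\mathcal{G}_g) \leq 2(\delta+2)\bigl(1+\frac{2}{\delta}\bigr)\bigl(1+\frac{2g}{1+\delta}\bigr)$, as claimed. There is no real obstacle here: the corollary reduces to a substitution, and the entire technical effort resides in the proofs of Theorem~\ref{mainthm} (already established) and Theorem~\ref{thm:genusspanner} (proved later in Section~\ref{sec:genaral:to:bd:genus}). The only minor caveat to keep in mind is that Theorem~\ref{mainthm} requires monotonicity of the graph class, which is precisely why the first step of the plan is needed.
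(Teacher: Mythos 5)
Your proposal is correct and matches the paper's derivation exactly: the corollary is obtained by substituting the lightness bound of Theorem~\ref{thm:genusspanner} (with $\epsilon=\delta$) into Theorem~\ref{mainthm}, using that the class of graphs of genus at most $g$ is monotone. Your explicit verification of monotonicity is a welcome addition that the paper leaves implicit.
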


Even more generally, it is known that $H$-minor-free graphs have light spanners~\cite{minorspanner}. Specifically, every $H$-minor-free graph contains a  $(1+\delta)$-spanner of lightness  $O\left( \frac{\sigma_H}{\delta^3} \log \left(\frac{1}{\delta} \right)\right)$ where $\sigma_H=|V(H)|\sqrt{\log |V(H)|}$. This yields a constant competitive ratio for $\blocking_\delta$ on $H$-minor-free graphs as follows.

 \begin{corollary}\label{cor:blocking:delta:minor}
 $\blocking_\delta$ is $2(\delta + 2) \cdot O\left( \frac{\sigma_H}{\delta^3} \log \left( \frac{1}{\delta} \right) \right)$-competitive on $H$-minor-free graphs where $\sigma_H=|V(H)|\sqrt{\log |V(H)|}$.
 \end{corollary}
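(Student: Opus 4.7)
The plan is to derive this corollary as an immediate application of \cref{mainthm} combined with the light-spanner bound for $H$-minor-free graphs referenced in the preceding paragraph. The single ingredient I need to supply on top of \cref{mainthm} is that the class $\mathcal{G}_H$ of $H$-minor-free graphs is monotone. This holds because any subgraph $G'$ of an $H$-minor-free graph $G$ is again $H$-minor-free: every minor of $G'$ is also a minor of $G$, so if $H$ is not a minor of $G$, it cannot be a minor of $G'$ either. This matches exactly the hypothesis of \cref{mainthm}.

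Next I would invoke the theorem of Borradaile, Le, and Wulff-Nilsen~\cite{minorspanner} quoted just above the corollary statement: every $H$-minor-free graph admits a $(1+\delta)$-spanner of lightness at most $O\!\left( \frac{\sigma_H}{\delta^3} \log \left( \frac{1}{\delta} \right) \right)$, where $\sigma_H = |V(H)|\sqrt{\log |V(H)|}$. Taking the supremum over the class gives $\optspan(\mathcal{G}_H) \leq O\!\left( \frac{\sigma_H}{\delta^3} \log \left( \frac{1}{\delta} \right) \right)$, where the big-$O$ hides only absolute constants (the dependence on $H$ is already captured by $\sigma_H$).

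Plugging this bound into \cref{mainthm} yields a competitive ratio of $2(\delta+2)\cdot \optspan(\mathcal{G}_H) = 2(\delta+2)\cdot O\!\left( \frac{\sigma_H}{\delta^3} \log \left( \frac{1}{\delta} \right) \right)$, exactly as claimed. There is no genuine obstacle here: the proof is just a chain of substitutions into \cref{mainthm}, whose hard work (\cref{lem:blockingspanner} together with \cref{obs:boundP}) has already been carried out. The only verification required is the closure of the minor-free class under subgraphs, which is standard and takes one line.
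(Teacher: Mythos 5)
Your proposal is correct and follows exactly the route the paper intends: apply \cref{mainthm} to the monotone class of $H$-minor-free graphs, using the light-spanner bound from \cite{minorspanner} to control $\optspan(\mathcal{G}_H)$. The paper leaves the proof implicit after quoting that spanner bound, and your observation that $H$-minor-freeness is preserved under subgraphs is precisely the small verification needed to invoke \cref{mainthm}.
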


There are also strong bounds for general graphs.  Given a graph $G$ with $n$ vertices and an integer $k\geq 1$ and $\epsilon \in (0,1)$, $G$ contains a $(2k-1)(1+\epsilon)$-spanner of lightness $O_\epsilon\left(n^{1/k} \right)$~\cite{chechik}, where the notation $O_\epsilon$ indicates that the constant factor hidden in the $O$-notation depends on $\epsilon$. This gives us the following.

\begin{corollary}\label{cor:blocking:delta:general}
Given an integer $k=k(n)\geq 1$ and $\epsilon \in (0,1)$, $\blocking_{(2k-1)(1+\epsilon)}$ is ${2((2k-1)(1+\epsilon) + 2)} \cdot O_\epsilon\left(n^{1/k}  \right)$-competitive on every graph.
\end{corollary}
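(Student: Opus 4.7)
The plan is to combine Theorem~\ref{mainthm} directly with the Chechik--Wulff-Nilsen spanner existence result cited just above. For each fixed positive integer $n$, the family $\mathcal{G}_n$ of all graphs on at most $n$ vertices is closed under taking subgraphs, hence is a monotone graph class to which Theorem~\ref{mainthm} applies.

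First I would translate the cited spanner bound into the form used by $\optspan$. The result of Chechik and Wulff-Nilsen provides, for every $G \in \mathcal{G}_n$, a spanner of stretch $(2k-1)(1+\epsilon)$ and lightness $O_\epsilon(n^{1/k})$. Since $(2k-1)(1+\epsilon) \leq 1 + (2k-1)(1+\epsilon)$, any such spanner is in particular a $(1+\delta)$-spanner for $\delta := (2k-1)(1+\epsilon)$, so $\optspan(\mathcal{G}_n) \leq O_\epsilon(n^{1/k})$ for this choice of $\delta$.

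Plugging this bound into Theorem~\ref{mainthm} with $\delta = (2k-1)(1+\epsilon)$ yields, for every $G \in \mathcal{G}_n$ and every starting vertex $v$, the inequality $\cost(G,v,\delta) \leq 2(\delta+2) \cdot O_\epsilon(n^{1/k}) \cdot w(\mst) \leq 2\bigl((2k-1)(1+\epsilon) + 2\bigr) \cdot O_\epsilon(n^{1/k}) \cdot \opt(G)$, which is precisely the claimed competitive ratio. There is no substantive obstacle here: the corollary is a direct specialization of Theorem~\ref{mainthm}. The only mild subtlety worth noting is that $k$, and hence $\delta$ and the host class, may depend on $n$; however, since Theorem~\ref{mainthm} is applied graph-by-graph, it suffices that $\optspan(\mathcal{G}_n)$ be finite for each fixed $n$, and no uniform bound across all $n$ is required.
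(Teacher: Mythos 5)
Your proposal is correct and takes essentially the same approach as the paper, which silently specializes Theorem~\ref{mainthm} with $\delta=(2k-1)(1+\epsilon)$ and the Chechik--Wulff-Nilsen lightness bound; you correctly note that a spanner of stretch $(2k-1)(1+\epsilon)$ is a fortiori a $(1+\delta)$-spanner with this $\delta$. Your remark about the $n$-dependence of $\delta$ and the host class is well placed: since the inequality in Theorem~\ref{mainthm} is established graph-by-graph, restricting to $\mathcal{G}_n$ (or, equivalently, applying the spanner bound directly to the $n$-vertex graph $B\cup\mstb$) resolves the only subtlety.
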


In particular, by suitably choosing $\delta$, we obtain the following.

\begin{restatable}{corollary}{blockingcorollary}\label{cor:blockinggenus}
\begin{enumerate}[a)]
\item $\blocking_2$ is $16(1+\frac{2}{3}g)$-competitive on graphs of genus at most $g$.\label{thmpart:genus}
\item For every constant $\delta>0$ and every graph $H$, $\blocking_\delta$ is constant-competitive on $H$-minor-free graphs.\label{thmpart:minor}
\item $\blocking_{\log (n)}$ is $O(\log (n))$-competitive on every graph.\label{thmpart:general}
\end{enumerate}
\end{restatable}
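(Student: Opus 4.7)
All three parts are obtained by specializing the already-stated Corollaries~\ref{cor:blocking:delta:genus}, \ref{cor:blocking:delta:minor}, and \ref{cor:blocking:delta:general} to a suitable choice of parameters, so the plan is essentially three short calculations.

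For part~\ref{thmpart:genus}, the plan is to plug $\delta = 2$ into the bound of Corollary~\ref{cor:blocking:delta:genus}. The factor becomes $2(2+2)\bigl(1+\tfrac{2}{2}\bigr)\bigl(1+\tfrac{2g}{1+2}\bigr) = 8 \cdot 2 \cdot \bigl(1+\tfrac{2g}{3}\bigr) = 16\bigl(1+\tfrac{2g}{3}\bigr)$, which is exactly the claimed bound. Nothing else is needed.

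For part~\ref{thmpart:minor}, the plan is to invoke Corollary~\ref{cor:blocking:delta:minor}. The only thing to observe is that for a fixed excluded minor $H$ and a fixed constant $\delta > 0$, both $\sigma_H = |V(H)|\sqrt{\log |V(H)|}$ and the expression $\tfrac{1}{\delta^3}\log(1/\delta)$ are absolute constants, so that the whole factor $2(\delta+2) \cdot O\bigl(\tfrac{\sigma_H}{\delta^3}\log(\tfrac{1}{\delta})\bigr)$ collapses to a constant depending only on $H$ and $\delta$.

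For part~\ref{thmpart:general}, the plan is to apply Corollary~\ref{cor:blocking:delta:general} with parameters tuned so that $(2k-1)(1+\epsilon)$ equals $\log(n)$ while $n^{1/k}$ stays bounded. Fixing, for instance, $\epsilon = 1$ and choosing $k = \lceil (\log(n)+2)/4 \rceil$ yields $(2k-1)(1+\epsilon) \geq \log(n)$ (with equality up to rounding), so Corollary~\ref{cor:blocking:delta:general} applies to $\blocking_{\log(n)}$ after a routine monotonicity adjustment in $\delta$. With this choice, $n^{1/k} = 2^{\log(n)/k} = O(1)$, while the leading factor $2\bigl((2k-1)(1+\epsilon)+2\bigr)$ is $O(\log(n))$. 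Multiplying these two gives the claimed $O(\log(n))$ competitive ratio. The only minor subtlety worth checking is the monotonicity remark: increasing $\delta$ can only weaken the blocking condition, but since our target algorithm $\blocking_{\log(n)}$ is exactly the one whose cost we want to bound, we simply pick the parameters so that the prescribed $\delta$ matches; no extra obstacle arises. Overall, no step is technically hard---this corollary is a clean book-keeping consequence of the three preceding ones.
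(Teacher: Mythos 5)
Your proposal is essentially the paper's proof: parts a) and b) are verbatim the same specialization of Corollaries~\ref{cor:blocking:delta:genus} and \ref{cor:blocking:delta:minor}, and part~c) follows the same plan of tuning $k$ and $\epsilon$ in Corollary~\ref{cor:blocking:delta:general} so that $(2k-1)(1+\epsilon)=\log(n)$. Two small quibbles about part~c): first, you fix $\epsilon=1$, which lies outside the allowed range $\epsilon\in(0,1)$ of Corollary~\ref{cor:blocking:delta:general}; the paper instead picks an odd integer $2k-1$ with $\log(n)/2 < 2k-1 < 2\log(n)/3$ and then sets $\epsilon=\log(n)/(2k-1)-1\in(0.5,1)$, so the product is exactly $\log(n)$ with a legal $\epsilon$ that is also bounded away from $0$ (keeping the $O_\epsilon$ constant uniform). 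Second, your ``monotonicity adjustment in $\delta$'' is a red herring: no such monotonicity of $\cost(G,v,\delta)$ in $\delta$ is established in the paper (raising $\delta$ blocks more edges and can change the whole exploration route nontrivially), so you cannot transfer a bound proved for $\blocking_{\delta'}$ to $\blocking_{\log(n)}$ by a monotonicity argument. Fortunately you do not actually need it, since, as you yourself note at the end, you can tune $k$ and $\epsilon$ so the prescribed parameter matches $\log(n)$ exactly; once you do that (with $\epsilon<1$), the argument is exactly the paper's.
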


\begin{proof}
Part \ref{thmpart:genus}) follows from Corollary~\ref{cor:blocking:delta:genus} by setting~$\delta=2$
and part \ref{thmpart:minor}) follows from Corollary~\ref{cor:blocking:delta:minor}.

For part \ref{thmpart:general}), note that one can choose an integer $k\in \Theta(\log (n))$ and $\epsilon \in (0.5,1)$ such that $(2k-1)(1+\epsilon)=\log (n)$. For the competitive ratio from Corollary~\ref{cor:blocking:delta:general}, we then obtain
\begin{equation*}
2((2k-1)(1+\epsilon) + 2) \cdot O_\epsilon(n^{1/k})= O( \log (n) \cdot n^{1/\log (n)})=O(\log (n)).\qedhere
\end{equation*} 
\end{proof}

For the case of planar graphs, part \ref{thmpart:genus}) matches the best known bounds on planar \hbox{graphs~\cite{ pruhs94, megow}}. For general surfaces, it slightly improves on the best known bound of $16(1+2g)$ on bounded genus graphs \cite{megow}. Part~\ref{thmpart:minor}) is the first constant bound on minor-free graphs, and part~\ref{thmpart:general}) is the first $O(\log (n))$ bound for $\blocking$.

\subsection{Lower bounds for $\blocking$}\label{sec:lowerbounds}

Next, we investigate lower bounds for $\blocking$ when~$\delta$ is allowed to depend on the input size.
In \cite{megow}, it was shown that the competitive ratio of $\blocking_\delta$ on general graphs is at least $\Omega(n^{1/(\delta+4)})$ when $\delta$ is a constant. 
We begin by observing that this can be generalized to non-constant $\delta$ that are not too large.

\begin{restatable}{observation}{obslowerbound}\label{obs:lowerconstruction}
Suppose $\delta=\delta(n)>0$ such that $\delta^{2\delta+8}=o(n)$. Then, the competitive ratio of $\blocking_\delta$ is at least $\Omega(\delta \cdot n^{1/(\delta+4)})$. 
\end{restatable}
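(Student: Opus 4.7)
The plan is to adapt the lower bound construction of Megow et al.~\cite{megow} that witnesses the $\Omega(n^{1/(\delta+4)})$ bound in the case of constant~$\delta$, and to re-examine its dependence on $\delta$ when $\delta$ is allowed to grow with $n$. Their construction is a family of weighted graphs $G_k$ obtained by recursively nesting $k$ copies of a single gadget on $\Theta(\delta^{\delta+4})$ vertices. Each gadget is designed so that $\blocking_\delta$ is forced, by a carefully placed cheap boundary edge, to descend deep into a subtree before a previously $\delta$-blocked edge becomes unblocked; this yields a cost blowup of a factor $\Theta(\delta)$ per nesting level, while $\opt$ can traverse the gadget essentially along a single short path. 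Unrolling the recursion gives $|V(G_k)|=\Theta(\delta^{k(\delta+4)})$ and $\cost(G_k,v,\delta)=\Omega(\delta^{k+1})\cdot\opt(G_k)$.

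First, for each $n$ with $\delta^{2\delta+8}\leq n$ I would pick the largest integer $k\geq 2$ such that $|V(G_k)|\leq n$; the hypothesis $\delta^{2\delta+8}=o(n)$ guarantees that such a~$k$ exists for all sufficiently large $n$, and gives $k(\delta+4)\log\delta=\Theta(\log n)$, hence
\begin{equation*}
\delta^{k}=\Theta\bigl(n^{1/(\delta+4)}\bigr).
\end{equation*}
If necessary, I would pad $G_k$ with a constant-weight path attached to the start vertex to obtain exactly $n$ vertices, which affects both $\cost$ and $\opt$ only by an additive constant and hence the ratio only by a $1+o(1)$ factor. Combined with the blowup estimate this yields
\begin{equation*}
\frac{\cost(G_k,v,\delta)}{\opt(G_k)}=\Omega\bigl(\delta^{k+1}\bigr)=\Omega\bigl(\delta\cdot n^{1/(\delta+4)}\bigr),
\end{equation*}
as claimed.

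Second, I would verify that the analysis of $\blocking_\delta$ on the gadget from~\cite{megow} goes through verbatim when $\delta$ is a function of $n$ rather than a constant. The only facts used are (i)~the structural identity $d(u,v')\leq(1+\delta)w(e)$ that certifies the key edge is $\delta$-blocked at the moment of interest, and (ii)~the weight setting that makes the cheap alternative edge unavailable until after the deep descent. Both are purely combinatorial relations on the fixed construction and do not rely on $\delta$ being constant, so the per-level factor of $\delta$ persists.

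The main obstacle is bookkeeping: one has to make the dependence of the gadget size on $\delta$ fully explicit (rather than hidden in an $O(1)$ as in~\cite{megow}) and check that exactly the hypothesis $\delta^{2\delta+8}=o(n)$ suffices to fit at least two levels of nesting into a graph on $n$ vertices, since a single level only yields a ratio of $\Theta(\delta)$ and contributes nothing of the form $n^{1/(\delta+4)}$. Once this accounting is done, no further work is needed beyond citing the gadget-level analysis of~\cite{megow}.
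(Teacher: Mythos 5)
Your plan gets the broad strategy right (re-examine the Megow et al.\ Theorem~3 construction, pick the largest admissible nesting depth~$k$, check that the hypothesis on~$\delta$ fits at least two nesting levels), but it has a genuine gap: you claim that ``the analysis of $\blocking_\delta$ on the gadget from~\cite{megow} goes through verbatim when $\delta$ is a function of~$n$,'' whereas the paper's proof deliberately \emph{modifies} the construction. Specifically, the paper replaces the ``blocking paths'' and ``center paths'' in the Megow et al.\ gadget by $d+1$ edges of weight~1 followed by a suitable number of heavier edges of weight $d/(1+\delta)$, and requires $d\geq \delta^2$. The point of this change is to decouple the gadget's vertex count from~$\delta$, so that the number of vertices per gadget depends only on~$d$. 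This is exactly the ``bookkeeping'' you flag as the main obstacle, except that it is not mere accounting: without the modification, the per-gadget vertex count retains a $\delta$-dependence that changes the exponent in the relationship between the nesting depth, the total vertex count, and the achieved ratio, and it is not clear the original gadget still yields $\Omega(\delta\cdot n^{1/(\delta+4)})$ under the hypothesis $\delta^{2\delta+8}=o(n)$.

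A second, related issue is that the quantitative claims you make about the unmodified construction (a single gadget on $\Theta(\delta^{\delta+4})$ vertices, $|V(G_k)|=\Theta(\delta^{k(\delta+4)})$, cost blowup $\Omega(\delta^{k+1})$) are asserted rather than derived; they appear to be reverse-engineered from the desired conclusion so that the algebra closes, but the entire observation rests on these numbers. The paper's proof is careful precisely on this point: the modification with parameter $d\geq\delta^2$ is what makes the gadget size a clean function of~$d$ alone, and it is from that clean dependence — not from the original gadget — that the hypothesis $\delta^{2\delta+8}=o(n)$ and the bound $\Omega(\delta\cdot n^{1/(\delta+4)})$ are extracted.
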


\begin{proof}
A careful investigation of the construction in \cite[Theorem 3]{megow} with the following slight change gives the desired bound: The ``blocking paths'' and ``center paths'' are replaced by $d+1$ edges of weight~1, followed by a suitable number of heavier edges of weight $d/(1+\delta)$. If $d\geq \delta^2$, this ensures that the number of vertices in a gadget only depends on $d$ but not on~$\delta$. 
\end{proof}

Note that $2\delta+8 \leq \log (n)/\log(\log(n))$ implies that
\begin{align*}
\delta^{2 \delta + 8} &\leq \left( \frac{\log (n)}{\log(\log(n))} \right)^{\frac{\log (n)}{\log(\log(n))}}
=  \left( \frac{1}{\log(\log(n))} \right)^{\frac{\log (n)}{\log(\log(n))}}\cdot e^{\log ( \log (n))\frac{\log (n)}{\log(\log(n))}}\\
&= \left( \frac{1}{\log(\log(n))} \right)^{\frac{\log (n)}{\log(\log(n))}} \cdot n
=o(n),
\end{align*}
i.e., the prerequisites of \cref{obs:lowerconstruction} are fulfilled. Moreover, $\Omega(\delta n^{1/(\delta+4)})\geq\Omega(\log (n))$ for every $\delta=\delta(n)$. Therefore, \cref{obs:lowerconstruction} shows that $\blocking_\delta$ has competitive ratio in $\Omega(\log (n))$  whenever $\delta = o(\log (n)/\log \log (n))$.  In particular, this shows the first part of \cref{thm:blockinglower}\ref{thmpart:specific:delta:lowerbound}.

Next, we give another lower bound which shows that the parameter $\delta$ cannot be chosen too large either (the second part of \cref{thm:blockinglower}\ref{thmpart:specific:delta:lowerbound}).

\begin{restatable}{lemma}{lemmalowerbound}\label{lem:deltalower}
Suppose $\delta=\delta(n) \in (0,\frac{n-4}{4})$. The competitive ratio of $\blocking_\delta$ is at least~$\Omega(\delta)$, even on trees.
\end{restatable}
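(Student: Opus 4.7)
The plan is to construct, for every $n$ and every $\delta=\delta(n)\in[1,(n-4)/4)$ (the range $\delta<1$ gives a trivial bound since the competitive ratio is always at least $1$), a weighted tree $T_n$ on $n$ vertices witnessing a competitive ratio of $\Omega(\delta)$. Set $L=\lfloor\delta\rfloor$, $M=n-1-L$, and fix a small $\eta\in(0,1)$. The tree $T_n$ has start vertex $v$ attached to a \emph{spine} $v=s_0,s_1,\dots,s_L$ via $L$ unit-weight edges, together with $M$ \emph{pendant leaves} $w_1,\dots,w_M$ attached directly to $v$ by edges of weight $1+\eta$. The hypothesis $\delta<(n-4)/4$ forces $L\leq(n-1)/4$ and hence $M\geq 3L$, so $w(T_n)=L+M(1+\eta)$ is dominated by the pendants.

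The core of the proof is to verify by induction on $j$ that during the recursive call $\blocking_\delta(s_j)$ with $j<L$, the only non-blocked, triggered boundary edge is the next spine edge $(s_j,s_{j+1})$. Each pendant edge $(v,w_i)$ is $\delta$-blocked by $(s_j,s_{j+1})$ since the strict cost inequality $1<1+\eta$ holds and the internally explored path $v,s_1,\dots,s_{j+1}$ has length $j+1\leq L\leq(1+\delta)(1+\eta)$. Consequently the agent walks straight down the spine in $L$ unit steps, reaching $s_L$ without exploring any pendant. The main obstacle is exactly this inductive step: $L$ must be large enough to produce a $\Theta(\delta)$ round-trip per pendant, yet small enough that the blocking inequality holds throughout the descent, which is precisely what pins down $L=\lfloor\delta\rfloor$.

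At $s_L$ no cheaper boundary edge survives, so every $(v,w_i)$ becomes unblocked; and since each was most recently blocked by $(s_{L-1},s_L)$, an edge incident to the current vertex, all $M$ pendants trigger inside $\blocking_\delta(s_L)$. Its while-loop therefore performs $M$ iterations, each paying line-2 cost $L$ to walk from $s_L$ back to $v$, line-3 cost $1+\eta$ for the pendant, and line-5 cost $L+1+\eta$ to return to $s_L$. Combined with the $L$ line-3 costs paid on the descent and the $L$ line-5 costs from unwinding the recursion, the total algorithm cost is $2L+2M(L+1+\eta)$, whereas $\opt(T_n)=2w(T_n)=2L+2M(1+\eta)$. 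The competitive ratio is thus at least
\[
1+\frac{ML}{L+M(1+\eta)},
\]
which, using $M\geq 3L$ and choosing $\eta$ sufficiently small, is at least $1+\tfrac{3L}{4+3\eta}=\Omega(L)=\Omega(\delta)$, as required.
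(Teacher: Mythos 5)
Your proof is correct, and it takes a genuinely different construction from the paper's. The paper's proof uses a path of $2k$ vertices ($n=4k$) with weight-$1$ ``light'' leaves attached to the first $k$ path vertices and ``heavy'' leaves of weight $(k+1)/(\delta+1)$ attached to the last $k$; the agent is forced to shuttle back and forth between the two halves, alternately exploring a light leaf and a heavy leaf. Crucially, because the light leaves and the path edges both have weight $1$, the paper must adversarially break ties so that the agent first walks the whole path before touching any leaf. Your ``broom'' construction avoids this subtlety entirely: the $L=\lfloor\delta\rfloor$ spine edges have weight $1$, whereas all $M=n-1-L$ pendants at the root have weight $1+\eta>1$, so the spine is forced first without any tie-breaking, and the inequality $j+1\leq L\leq\delta<(1+\delta)(1+\eta)$ keeps the pendants blocked exactly until $s_L$ is reached, at which point each pendant costs a full $\Theta(L)$ round trip. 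Both proofs yield $\Omega(\delta)$; yours is a bit cleaner (no tie-breaking, a single ``explosion'' at $s_L$ rather than an interleaved schedule), and in the limit $M/L\to\infty$, $\eta\to 0$ it even gives a slightly larger constant, $1+\lfloor\delta\rfloor$, compared to the paper's $(\delta+1)/2$. You correctly note the degenerate case $\delta<1$ (where $L=0$) is trivial.
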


\begin{proof}
Given a
positive
integer $k$, 
we construct a graph $G$ as follows (see \cref{fig:lowerconstruction}): We begin with a path of $2k$ vertices and edges of weight 1. The first vertex $v$ of this path is the start vertex. To the first $k$ vertices, we attach leaves by edges $e_1, \dots, e_k$ of weight 1 and call these \emph{light edges}. To the last $k$ vertices, we attach leaves by edges $e_1', \dots, e_k'$ of weight $h\coloneqq\frac{k+1}{\delta+1}$ and call these \emph{heavy edges}. Note that $n=4k$ and $h>1$.

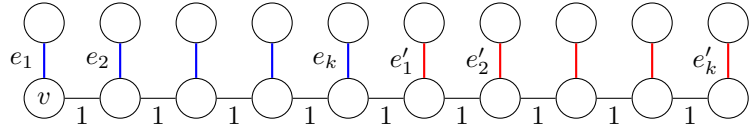
\begin{figure}
\begin{center}
\begin{tikzpicture}[scale=1]
\node (v1) at (1,0) [circle, draw, inner sep=0, minimum size=15pt] {$v$};
\foreach \x in {2,...,10}{
\node (v\x) at (\x,0) [circle, draw,inner sep=0, minimum size=15pt] {};
}
\foreach \x in {1,...,9}{
\draw[-] (v\the\numexpr \x + 1 \relax) to  node [below] {1} (v\x);
}
\foreach \x in {1,2}{
\node (A) at (\x,1) [circle, draw, inner sep=0, minimum size=15pt] {};
\draw[-, blue, thick] (A) to node [left] {\color{black}$e_{\x}$}(v\x);
}
\foreach \x in {3,4}{
\node (A) at (\x,1) [circle, draw, inner sep=0, minimum size=15pt] {};
\draw[-, blue, thick] (A) to (v\x);
}
\node (A) at (5,1) [circle, draw, inner sep=0, minimum size=15pt] {};
\draw[-, blue, thick] (A) to node [left] {\color{black}$e_{k}$}(v5);
\foreach \x in {6,7}{
\node (A) at (\x,1) [circle, draw, inner sep=0, minimum size=15pt] {};
\draw[-, red, thick] (A) to node [left] {\color{black}$e_\the\numexpr \x -5 \relax '$}(v\x);
}
\foreach \x in {8,9}{
\node (A) at (\x,1) [circle, draw, inner sep=0, minimum size=15pt] {};
\draw[-, red, thick] (A) to (v\x);
}
\node (A) at (10,1) [circle, draw, inner sep=0, minimum size=15pt] {};
\draw[-, red, thick] (A) to node [left] {\color{black}$e_k '$}(v10);
\end{tikzpicture}
\end{center}
\caption{ Illustration of the lower bound construction for $\blocking_\delta$ (\cref{lem:deltalower}). The light edges (depicted in blue) are of weight~1 and the heavy edges (depicted in red) are of weight $\frac{k+1}{\delta + 1}$.}\label{fig:lowerconstruction}
\end{figure}

Next, observe that $\blocking_\delta$ explores the graph in the following way:
We can adversarially assume
that $\blocking_\delta$ begins by exploring the path of $n/2$ vertices (and none of the light edges).
\footnote{Upon exploring one of the first $k$ path vertices, the agent cannot distinguish  the incident light edge from the next edge of the path because they both have weight 1 and lead to an unexplored vertex. Therefore, it is not specified in the algorithm which edge should be traversed next. As an adversary, we can simply choose after every step that the edge traversed by the agent is the path edge.} Then, the heavy edge $e_i'=(u_i',v_i')$ is $\delta$-blocked by the light edges $e_i \dots, e_k$, because $w(e_i')>1$ and the distance from $u_i'$ to the
unexplored
end vertex of $e_i$ is
\begin{equation*}
	k+1=(1+\delta) \frac{k+1}{\delta+1} \leq (1+\delta) w(e_i').
\end{equation*}
But $e_i'$ is not blocked by the light edges $e_1, \dots, e_{i-1}$, because the distance from $u_i'$ to the end point of $e_{i-1}$ is
\begin{equation*}
	k+2=(1+\delta) \frac{k+2}{\delta+1} > (1+\delta) w(e_i').
\end{equation*}
 Therefore, the agent explores, for $i=1, \dots, k$, the $(k-i)$-th light edge and then the $(k-i)$-th heavy edge, forcing it to travel a distance of more than $k$ at least $k$ times.
 Hence,
\begin{equation*}
\cost(G,v, \delta)\geq k^2.
\end{equation*}
Since $G$ is a tree, the optimal tour has cost 
\begin{equation*}
2w(G)=2 \left(2k-1+k + \frac{k^2}{(1+\delta)}\right).
\end{equation*}
Therefore, the competitive ratio of $\blocking_\delta$ is at least
\begin{equation*}
\frac{\cost (G,v, \delta)}{2 w(G)}\geq \frac{k^2}{2\left(\frac{k^2}{1+\delta}+3k-1\right)}\to \frac{\delta+1}{2} ~~ (k \to \infty).
\end{equation*}
Since $k=n/4$ so that $n \to \infty$ implies $k\to \infty$, this proves the lower bound of $\Omega(\delta)$.
\end{proof}

To conclude our lower bound arguments for $\blocking_\delta$, observe that, for $\delta\geq \frac{n-4}{4}$, the behavior of $\blocking_\delta$ closely resembles the behavior of the algorithm hDFS \cite{megow}. In fact, it is not difficult to check that, on the lower bound construction for hDFS given \hbox{in~\cite[Theorem 5]{megow}}, after proceeding to edges of weight more than 16, $\blocking_\delta$ takes the exact same route as hDFS, if $\delta\geq \frac{n-4}{4}$. Therefore, we obtain the following.

\begin{observation}\label{obs:hdfs}
For $\delta \geq  \frac{n-4}{4}$, the competitive ratio of $\blocking_\delta$ is at least $\Omega(\log (n))$.
\end{observation}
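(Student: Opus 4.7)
The plan is to reuse the lower bound construction for hDFS from~\cite[Theorem 5]{megow} essentially verbatim and verify that $\blocking_\delta$ traces out the same bad traversal whenever $\delta\geq \frac{n-4}{4}$. Recall that the hDFS construction is a recursive family of weighted graphs whose optimum is $\Theta(w(\mst))$ but on which hDFS is forced to re-traverse long light paths before being allowed to commit to each heavy edge, accumulating a $\Omega(\log n)$ overhead. The algorithm hDFS enforces this detour because it always prefers to exhaust lighter boundary edges before taking a heavier one. I would therefore first argue that, on this specific graph, $\blocking_\delta$ displays precisely the same preference once it has reached boundary edges of weight exceeding $16$.

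The key technical step is the following observation about the blocking condition. Since $\delta\geq \frac{n-4}{4}$, we have $1+\delta \geq n/4$, so for any boundary edge $e$ with $w(e)>16$ and any lighter boundary edge $e'=(u',v')$, the required inequality $d(u,v')\leq (1+\delta) w(e)$ is automatically satisfied: the distance between any two vertices in the already explored part is at most $n$ times the maximum light-edge weight of the construction, which is a constant $\leq 16$ by design, hence at most $(n/4)\cdot 16 \leq (1+\delta)\,w(e)$. Consequently, every heavy boundary edge remains $\delta$-blocked as long as a single lighter boundary edge exists, exactly mirroring the hDFS preference rule.

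With this in hand, I would prove by induction along the recursive structure of the construction that at every recursive level $\blocking_\delta$ makes the same choice as hDFS: it exhausts all available lighter boundary edges before committing to the next heavy one, and when it finally does commit, it incurs the same return and approach cost as hDFS does. The offline optimum is unchanged, so the competitive ratio of $\blocking_\delta$ inherits the $\Omega(\log n)$ lower bound from hDFS.

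The main obstacle I anticipate is bookkeeping, not conceptual: one has to make sure that no heavy boundary edge becomes unblocked prematurely (which would let $\blocking_\delta$ deviate from the hDFS pattern) and that the recursive identification of ``lighter'' vs.\ ``heavier'' edges survives the use of the blocking threshold $(1+\delta)w(e)$ at every scale. A careful check amounts to verifying that within each recursive block, the lightest currently available boundary edge lies within distance $(1+\delta)w(e)$ of every heavier candidate, which follows from the uniform bound on internal explored distances derived above, and that the initial prefix of the traversal (over edges of weight at most $16$) contributes only a lower-order additive term that does not affect the $\Omega(\log n)$ asymptotics.
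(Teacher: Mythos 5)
Your proposal follows the same route as the paper: both invoke the hDFS lower bound construction from~\cite[Theorem~5]{megow} and argue that once $\blocking_\delta$ (with $\delta\geq\frac{n-4}{4}$) reaches boundary edges of weight more than $16$, every heavier boundary edge is automatically $\delta$-blocked by any lighter one, so the algorithm retraces hDFS's bad route. The paper states this as a one-line assertion (``it is not difficult to check''), so you are right to try to pin down the key inequality.

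However, the chain of bounds you write to establish the blocking condition is internally inconsistent. You first bound the internally explored distance by $n$ times the maximum light-edge weight, which you say is a constant $\leq 16$; that gives $16n$. You then conclude ``hence at most $(n/4)\cdot 16 \leq (1+\delta)w(e)$'', silently replacing the factor $n$ by $n/4$. With $1+\delta\geq n/4$ and $w(e)>16$ one indeed gets $(1+\delta)w(e) > 4n$, but a distance bound of $16n$ is not dominated by $4n$. To make the argument close, you would need either a tighter distance bound (for instance, that at most $n/4$ vertices are explored at the relevant moment, or that the total weight of the light part of the construction is at most $4n$), or to raise the weight threshold in your claim from $16$ to $64$ so that $(1+\delta)w(e)>16n$. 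This is precisely the ``bookkeeping'' step you flagged as a concern, but as written it is presented as already verified when in fact the inequality does not follow, so the argument is not yet complete.
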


We can now combine the lower bound constructions from this section to 
 prove \hbox{\cref{thm:blockinglower}}.

\begin{proof}[Proof of \cref{thm:blockinglower}]
We begin with proving part b). In \cref{obs:lowerconstruction}, we have seen that the competitive ratio of $\blocking_\delta$ is at least $\Omega(\log (n))$ if $\delta \in o(\log (n)/\log \log (n))$. By \cref{lem:deltalower}, we obtain the same lower bound for every $\delta$ in the range from $\Omega(\log (n))$ to $(n-4)/4$, and by \cref{obs:hdfs}, we obtain the lower bound for $\delta$ at least $(n-4)/4$. Therefore, this proves the assertion of  \cref{thm:blockinglower}b. For part a), note that part b) implies that a competitive ratio of $o(\log (n))$ is only possible for $\delta$ in the range from $\Omega(\log (n)/\log \log (n))$ to $o(\log (n))$. Using \cref{obs:lowerconstruction} in this range implies the assertion of \cref{thm:blockinglower}a.
\end{proof}

\section{Graph spanners in bounded genus graphs}

In this section, we prove \cref{thm:genusspanner} about the existence of light spanners in bounded genus graphs. For this, we begin by introducing the greedy spanner.

\subsection{The greedy spanner}

The \emph{greedy $(1+\epsilon)$-spanner} was suggested by Althöfer et al.~\cite{althoefer} and is formally defined as the output of \cref{alg:greedyspanner}. After ordering the edges by weight, it iteratively adds edges if they are short in comparison to the distance of their endpoints in the graph constructed so far.
\begin{algorithm}
\caption{\textsc{GreedySpanner}($G=(V,E, w)$, $\epsilon$)}\label{alg:greedyspanner}
sort $E=\{e_1, \dots, e_m \}$ such that $w(e_1)\leq w(e_2) \leq \dots \leq w(e_m)$\\
$H\gets (V,\emptyset)$\\
\For{ $i \leftarrow 1 , \dots , m$}{
\If{$\dist_H(u_i,v_i)>(1+\epsilon)w(e_i)$, where $e_i=(u_i,v_i)$}{$H\gets H\cup \{e_i\}$}
}
\Return $H$
\end{algorithm}

Note that the resulting graph $H$ is indeed a $(1+\epsilon)$-spanner of $G$.
The output of the algorithm actually depends on the chosen order of the edges. 
In particular, when edge weights appear multiple times, there may be several possible outputs. However, this will not be important in our context. 
When we refer to \emph{the} greedy spanner, we mean that we arbitrarily fix some output of the algorithm.

The greedy spanner fulfills the following two key properties: First, the algorithm implicitly executes Kruskal's algorithm for finding a minimum spanning tree, i.e., it adds all edges to~$H$ that Kruskal's algorithm adds. With this, we obtain the following.
\begin{observation}\label{greedyMST}
The greedy spanner contains all edges of some minimum spanning tree of the input graph.
\end{observation}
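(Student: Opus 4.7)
The strategy is to run the greedy spanner algorithm and Kruskal's algorithm in lockstep on the same edge ordering, and show by induction on $i$ that every edge Kruskal adds at step $i$ is also added by the greedy spanner at step $i$. Since both algorithms process the edges $e_1, \dots, e_m$ in non-decreasing weight order, the invariant to maintain is that after step $i$, the set of Kruskal-edges added so far is a subset of the edge set $E(H)$ of the partial spanner.

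The key step is the inductive one: suppose at step $i$ Kruskal's algorithm adds $e_i = (u_i, v_i)$. By the inductive hypothesis, the forest Kruskal has built is contained in $H$. Since Kruskal adds $e_i$, the endpoints $u_i$ and $v_i$ lie in different connected components of that forest, hence also in different connected components of $H$ (adding more edges cannot disconnect components). Therefore $d_H(u_i, v_i) = \infty$, which certainly exceeds $(1+\epsilon) w(e_i)$, so the greedy spanner condition triggers and $e_i$ is added to $H$.

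After step $m$, Kruskal has produced a minimum spanning tree $T$ of $G$ (with the particular tie-breaking induced by the fixed ordering $e_1, \dots, e_m$), and the invariant yields $E(T) \subseteq E(H)$. The qualifier \emph{some} minimum spanning tree in the statement accommodates the fact that when edge weights repeat, the ordering of equal-weight edges chosen in line 1 of the greedy algorithm determines which MST is recovered.

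I do not anticipate any real obstacle here; the only thing to be careful about is the tie-breaking issue, which is handled by phrasing the conclusion in terms of ``some'' MST (namely the one Kruskal produces on the ordering actually used by the greedy algorithm), rather than an arbitrary one.
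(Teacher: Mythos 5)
Your plan matches the paper's one‑line justification, which simply asserts that the greedy spanner ``implicitly executes Kruskal's algorithm,'' so the approach is the same; you are just filling in the details. However, there is a genuine slip in the inductive step. You maintain only the invariant $F_{i-1}\subseteq H_{i-1}$, where $F_{i-1}$ is the Kruskal forest and $H_{i-1}$ the partial spanner, and then argue that if $u_i,v_i$ lie in different components of $F_{i-1}$ they must lie in different components of $H_{i-1}$, justifying this by ``adding more edges cannot disconnect components.'' That justification points in the wrong direction: since $H_{i-1}$ has \emph{more} edges than $F_{i-1}$, the containment $F_{i-1}\subseteq H_{i-1}$ only tells you that vertices in the same component of $F_{i-1}$ stay in the same component of $H_{i-1}$; it does not, by itself, rule out that the extra edges of $H_{i-1}$ reconnect $u_i$ and $v_i$. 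The clean fix is to strengthen the invariant to ``$F_i$ and $H_i$ have exactly the same connected components.'' This is trivially true at $i=0$, and the two algorithms preserve it in lockstep: if $u_i,v_i$ are in the same component of $F_{i-1}=H_{i-1}$ (as partitions), Kruskal rejects $e_i$ and adding $e_i$ to $H$ does not change its components; if they are in different components, Kruskal adds $e_i$, and since $d_{H_{i-1}}(u_i,v_i)=\infty$ the greedy spanner also adds $e_i$, merging the same two parts. This immediately yields the step you wanted, and the rest of your argument, including the remark about tie-breaking and ``some'' MST, is fine.
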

The second key property, in fact, resembles the property of $\blocking_\delta$ in Lemma~\ref{lem:blockingcycles}.
\begin{observation}[Althöfer et al.~\cite{althoefer}]\label{obs:longcycle}
For every cycle $C$ in the greedy spanner $H$ and every edge $e$ of $C$, we have $w(C\setminus \{e\})>(1+\epsilon) w(e)$. In other words, no proper subgraph of $H$ is a $(1+\epsilon)$-spanner of $H$.
\end{observation}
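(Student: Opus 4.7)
The plan is to establish the cycle-length inequality directly from the greedy acceptance rule, and then deduce the spanner-minimality statement from it.

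For the cycle inequality, I would first handle the heaviest edge of $C$ and transfer the bound to all other edges afterwards. Let $e^* = (u^*, v^*)$ be the edge of $C$ that is added to $H$ last by \cref{alg:greedyspanner} (breaking ties by the algorithm's processing order). Since edges are processed in nondecreasing weight order and all edges of $C$ end up in $H$, we have $w(e^*) = \max_{e \in C} w(e)$, and at the moment $e^*$ is examined every other edge of $C$ is already present in the current spanner. Hence $C \setminus \{e^*\}$ is a $u^*$-$v^*$ path in the partial $H$, and the acceptance test of line 4 requires that
\[
w(C \setminus \{e^*\}) \,\geq\, d_H(u^*,v^*) \,>\, (1+\epsilon)\, w(e^*).
\]
For any other edge $e \in C$, using $w(e) \leq w(e^*)$, we get
\[
w(C \setminus \{e\}) = w(C)-w(e) \,\geq\, w(C)-w(e^*) = w(C \setminus \{e^*\}) \,>\, (1+\epsilon)\, w(e^*) \,\geq\, (1+\epsilon)\, w(e).
\]

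For the reformulation, take any proper subgraph $H' \subsetneq H$ and fix an edge $e=(u,v) \in H \setminus H'$. If $e$ is a bridge of $H$, then $H'$ disconnects $u$ from $v$ while $H$ does not, so $H'$ is not a spanner of $H$. Otherwise $e$ lies on some cycle of $H$, and the shortest $u$-$v$ path $P$ in $H-e$ together with $e$ forms a cycle $C$ in $H$. Applying the first part to $C$ yields $d_{H-e}(u,v) = w(P) = w(C\setminus\{e\}) > (1+\epsilon)\, w(e) \geq (1+\epsilon)\, d_H(u,v)$, since $d_H(u,v) \leq w(e)$. As $H' \subseteq H-e$, distances in $H'$ can only be larger, so $d_{H'}(u,v) > (1+\epsilon)\, d_H(u,v)$, and $H'$ fails the spanner condition.

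The only delicate point is the treatment of equal edge weights; choosing $e^*$ as the edge of $C$ \emph{processed latest} by the algorithm, rather than an arbitrary maximum-weight edge, is what guarantees that every other edge of $C$ is already present when $e^*$ is tested, making the inequality derivable from the acceptance rule without further case analysis.
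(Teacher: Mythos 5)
Your argument for the cycle inequality is essentially the paper's proof: pick the edge $e^*$ of $C$ added last, observe that at that moment $C\setminus\{e^*\}$ is already present so the acceptance test forces $w(C\setminus\{e^*\}) \geq d_H(u^*,v^*) > (1+\epsilon)w(e^*)$, then transfer the bound to the other (no heavier) edges of $C$. You additionally spell out the ``in other words'' reformulation (bridge case versus cycle case, and the monotonicity of distances under passing to a subgraph), which the paper leaves implicit; that elaboration is correct and complete.
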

\begin{proof}
Let $C$ be a cycle in the greedy spanner. Let $e=(u,v)$ be the edge in $C$ that is added last. At the time it is added, we have $(1+\epsilon)w(e)<d_H(u,v)\leq w(C\setminus \{e\})$ by definition of the algorithm. Since all other edges in $C$ have lower or equal weight than $e$, the property is fulfilled for them as well.
\end{proof}

\subsection{Spanners in planar graphs}\label{sec:planar}

Before investigating spanners in bounded genus graphs, we illustrate the technique for the special case of planar graphs, giving an alternate proof of the following result. 
\begin{theorem}[Althöfer et al.~\cite{althoefer}]\label{thm:planarspanner}
For every planar graph $G$ and $\epsilon >0$, the greedy $(1+\epsilon)$-spanner of $G$ has lightness at most $1+\frac{2}{\epsilon}$.
\end{theorem}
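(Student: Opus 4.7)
The plan is to show $w(H) \leq (1 + 2/\epsilon)\, w(T)$, where $T \subseteq H$ is an MST of $G$ (which exists inside $H$ by \cref{greedyMST}). Setting $F := E(H) \setminus E(T)$, this reduces to establishing $w(F) \leq (2/\epsilon)\, w(T)$. My approach combines the greedy cycle property (\cref{obs:longcycle}) with a planar-duality argument that charges each non-tree edge to a distinct bounded face.

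First I would fix a planar embedding of $H$, which exists because $H$ is a subgraph of the planar graph $G$. I would then invoke the classical fact that the dual edges $F^{*} := \{e^{*} : e \in F\}$ form a spanning tree of the dual graph $H^{*}$. Rooting this dual spanning tree at the vertex of $H^{*}$ corresponding to the outer face of $H$, each bounded face has a unique parent dual edge, which yields a bijection $\sigma : F \to \Phi$ between non-tree edges and bounded faces such that $f$ lies on the boundary of $\sigma(f)$ for every $f \in F$.

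Next, for each bounded face $\phi \in \Phi$, set $f_\phi := \sigma^{-1}(\phi)$ and let $C_\phi \subseteq \partial \phi$ be a simple cycle of $H$ through $f_\phi$ (which exists because $f_\phi$ is not a bridge). Applying \cref{obs:longcycle} to the cycle $C_\phi$ and the edge $f_\phi \in C_\phi$ gives $w(C_\phi) > (2 + \epsilon)\, w(f_\phi)$. Summing over all $\phi \in \Phi$ and using that $\sigma$ is a bijection yields $\sum_{\phi} w(C_\phi) > (2 + \epsilon)\, w(F)$. On the other hand, every edge of $H$ lies on at most two face boundaries, so $\sum_{\phi} w(C_\phi) \leq 2\, w(H) = 2\, w(T) + 2\, w(F)$. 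Chaining the two bounds gives $\epsilon\, w(F) < 2\, w(T)$, i.e., $w(F) < (2/\epsilon)\, w(T)$, as desired.

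The hard part will be setting up the face bijection in the second paragraph cleanly. While the planar duality $F \leftrightarrow$ spanning tree of $H^{*}$ is standard, converting the dual spanning tree into a bijection in which $f$ is actually on the boundary of $\sigma(f)$ requires the rooting-and-child-edge reformulation above. A minor technical wrinkle is that face boundaries of $H$ may not be simple cycles when $H$ has bridges, but since every $f \in F$ is non-bridge, restricting to a simple cycle $C_\phi \subseteq \partial \phi$ through $f_\phi$ is always possible and does not disturb the double-counting step.
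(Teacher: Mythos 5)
Your proof is correct and reaches the same bound, but it constructs the injection from non-tree edges to facial cycles by a genuinely different route. The paper's \cref{lem:facialcycles} defines a partial order on $H \setminus \mst$ via nesting of fundamental cycles and iteratively peels off minimal edges to build facial cycles; you instead invoke planar duality---the interdigitating-trees fact that the dual edges $F^* = (E(H)\setminus E(T))^*$ form a spanning tree of $H^*$---root it at the outer face, and let each bounded face's unique parent dual edge induce the bijection $\sigma$. Both constructions produce the same deliverable (a facial cycle $C_e \ni e$ per non-tree edge, all distinct), after which both proofs combine the greedy cycle property (\cref{obs:longcycle}) with a double count over facial cycles (packaged as \cref{lem:spannercomb} in the paper), and the arithmetic from there is identical. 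Your duality route is shorter and leans on a standard structural fact; the paper's partial-order route is more self-contained and, importantly, serves as the blueprint for the bounded-genus extension in Section~\ref{sec:genaral:to:bd:genus}, where planar duality is no longer directly available and the paper instead cuts the surface to a disk. Your handling of the non-$2$-connected case is sound: non-tree edges are never bridges, so a simple cycle through $f_\phi$ inside the boundary walk of $\sigma(f_\phi)$ exists, and since each edge of $H$ lies on at most two face boundaries and appears at most once in each simple $C_\phi$, the estimate $\sum_\phi w(C_\phi) \le 2\,w(H)$ survives.
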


Our proof uses similar ideas as in \cite[Theorem 1]{megow} and is based on the following main idea: 
Fix an embedding of the greedy spanner in the plane and, in a suitable way, partition the greedy spanner into facial cycles, i.e., cycles that form the boundary of a face. 
Then use the fact that none of these cycles are short {(cf. \cref{obs:longcycle})}.

\begin{restatable}{lemma}{lemfacialcycles}\label{lem:facialcycles}
Let $G$ be a planar graph, $H$ be the greedy $(1+\epsilon)$-spanner of $G$ and $\mst$ be a minimum spanning tree of $H$. Fix an embedding of $H$ in the plane. Then, we can associate with every edge $e\in H \setminus \mst$ a facial cycle $C_e$ containing $e$, so that $C_e\neq C_{e'}$ for $e \neq e'$.
\end{restatable}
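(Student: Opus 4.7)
The plan is to exploit planar duality. The graph $H$ is planar as a subgraph of $G$, and connected since it contains the spanning tree $\mst$, so the fixed embedding determines a dual multigraph $H^*$ whose vertices are the faces of $H$ and whose edges $e^*$ are in bijection with the edges $e$ of $H$. The key ingredient I would invoke is the classical fact that for any spanning tree $T$ of a connected plane graph, the set $\{e^* : e \in E(H) \setminus E(T)\}$ forms a spanning tree of $H^*$. Applied with $T = \mst$, this yields a spanning tree $T^*$ of $H^*$ whose edges are exactly the duals of the edges in $H \setminus \mst$.

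Next, I would root $T^*$ at the unbounded face $F_\infty$. For each $e \in H \setminus \mst$ the dual $e^*$ is an edge of $T^*$ and thus has a well-defined child endpoint, namely the face $F_e$ farther from $F_\infty$ in the rooted tree. I define $C_e$ to be the facial cycle bounding $F_e$. Since $e^*$ is incident to $F_e$, the primal edge $e$ lies on the boundary of $F_e$, so $e \in C_e$ as required. Injectivity comes for free from the tree structure: in any rooted tree the map sending an edge to its child endpoint is a bijection onto the set of non-root vertices, so $e \neq e'$ forces $F_e \neq F_{e'}$ and hence $C_e \neq C_{e'}$. As a sanity check, Euler's formula gives exactly $|E(H)| - (|V|-1)$ bounded faces, matching the cardinality of $H \setminus \mst$.

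The one subtlety I expect to need to address is making sure that the boundary of $F_e$ really is a cycle through $e$, rather than merely a closed walk that repeats vertices or sweeps over bridges. Since $e \in H \setminus \mst$, adding $e$ to $\mst$ creates a cycle in $H$, so $e$ lies on some cycle of $H$ and is therefore not a bridge. Hence $e$ separates two distinct faces, one of which is $F_e$, and $e$ occurs exactly once on the boundary walk of $F_e$. If $H$ fails to be $2$-connected one can extract from this boundary walk a simple cycle through $e$ and take this as $C_e$; the injectivity argument, which operates purely on the dual tree, is unaffected. The main conceptual obstacle is really just recognizing the planar-duality correspondence between non-tree edges and bounded faces; once that is in place, the rooting of the dual tree gives the desired injection essentially for free.
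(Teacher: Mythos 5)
Your proof is correct and takes a genuinely different route from the paper. The paper's proof is a self-contained, iterative construction: it defines a partial order on $H\setminus\mst$ in which $e'$ precedes $e$ when $e'$ lies inside the fundamental cycle of $e$ with respect to $\mst$, and then repeatedly picks a minimal unassigned edge and shows that it closes a facial cycle with $\mst$ together with the previously chosen edges. You instead invoke the classical interdigitating-trees theorem of planar duality (the duals of the non-tree edges of a spanning tree of a connected plane graph form a spanning tree of the dual), root the resulting dual tree at the outer face, and read off the required bijection from edges to non-root vertices. The two arguments are morally the same --- the paper's partial order is in effect tracking depth in your dual tree --- but yours is tidier and more transparent because it leans on well-established machinery and makes the count $|H\setminus\mst|$ equal to the number of bounded faces explicit, whereas the paper's partial-order argument leaves the facial-cycle claim for minimal elements somewhat informal.

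One point you flag but do not fully close: when $H$ is not $2$-connected (which can certainly happen for the greedy spanner, e.g.\ when $G$ is a tree), the boundary of $F_e$ may be a closed walk rather than a simple cycle, and you propose extracting a simple cycle $C_e\subseteq\partial F_e$ through $e$. You then assert that ``the injectivity argument, which operates purely on the dual tree, is unaffected,'' but injectivity of $e\mapsto F_e$ does not automatically yield injectivity of $e\mapsto C_e$ once $C_e$ is only a subcycle of the boundary. It does hold, but needs a short argument: if $C_e=C_{e'}$ with $e\neq e'$, then $e'\in C_e\subseteq\partial F_e$, so the two faces incident to $e'$ are $F_e$ and $F_{e'}$, forcing $F_e$ to be the parent of $F_{e'}$ in the rooted dual tree; symmetrically $F_{e'}$ is the parent of $F_e$, a contradiction. (Alternatively one can observe that for the downstream application in Lemma~\ref{lem:spannercomb} it already suffices that each edge of $H$ lies on at most two face boundaries and hence appears in at most two of the $C_e$, since $e\mapsto F_e$ is injective.) With that remark supplied, your duality-based argument is a valid and arguably cleaner replacement for the paper's construction.
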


\begin{proof}
We show that it is possible to iteratively choose an edge $e$ in $H \setminus \mst$ that closes a facial cycle $C_e$ together with the edges of $\mst$ and the edges chosen in previous iterations:
We define a partial order on the edges in $H \setminus \mst$. Every edge $e \in H \setminus \mst$ closes a cycle~$C$ together with the edges of $\mst$. We let another edge $e'$ precede $e$ in the partial order if it lies on the inside of this cycle in the considered embedding (see \cref{fig:partialorder})\footnote{Here, the \emph{inside} of the cycle is, out of the two regions bounded by the cycle, the one that does not contain the outer face of $G$.}.
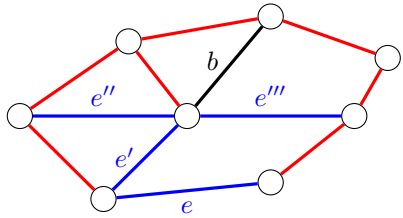
\begin{figure}
\begin{center}
\begin{tikzpicture}[scale=1.1]
\node (A) at (0,0) [circle, draw] {};
\node (B) at (2,0.2) [circle, draw] {};
\node (Ct) at (3.4,1.7) [circle, draw] {};
\node (C) at (3,1) [circle, draw] {};
\node (D) at (2,2.2) [circle, draw] {};
\node (E) at (0.3,1.9) [circle, draw] {};
\node (F) at (-1,1) [circle, draw] {};
\node (G) at (1,1) [circle, draw] {};
\draw[-, very thick, blue] (A) to node [below] {$e$} (B);
\draw[-, very thick, red] (B) to (C);
\draw[-, very thick, red] (C) to (Ct);
\draw[-, very thick, red] (Ct) to (D);
\draw[-, very thick, red] (D) to (E);
\draw[-, very thick, red] (E) to (G);
\draw[-, very thick, red] (E) to (F);
\draw[-, very thick, red] (F) to (A);
\draw[-, very thick, blue] (A) to node [left] {$e'$} (G);
\draw[-, very thick, blue] (F) to node [above] {$e''$} (G);
\draw[-, very thick] (D) to node [left, yshift=2pt] {$b$} (G);
\draw[-, very thick, blue] (C) to node [above] {$e'''$} (G);
\end{tikzpicture}
\end{center}
\caption{Illustration of the partial order in \cref{lem:facialcycles}: The red edges denote $\mst$, the black edge $b$ has already been chosen in a previous iteration and the blue edges $e, e', e'', e'''$ have not yet been assigned a facial cycle. In this example, $e'$, $e''$ and $e'''$ precede $e$. Moreover, $e''$ precedes $e'$. The edges $e''$ and $e'''$ are minimal. Note that $e'''$ was not minimal before the black edge was chosen. If $e'''$ is chosen in this step, the facial cycle assigned to $e'''$ consists of the black edge $b$, $e'''$, and two red edges.}\label{fig:partialorder}
\end{figure}
In each iteration, we can choose an edge which is minimal in this partial order amongst the edges to which no cycle has yet been assigned.
Note that, in this construction, no two edges are assigned the same cycle.
\end{proof}

Next, we illustrate how this can be combined with the fact that the greedy spanner does not contain short cycles (cf. \cref{obs:longcycle}).

\begin{restatable}{lemma}{lemspannercomb}\label{lem:spannercomb}
Let $G$ be a graph and $H$ be the greedy $(1+\epsilon)$-spanner of $G$. Let~$D$ be a subgraph of~$G$ such that we can associate with
every edge $e \in H \setminus D$ a cycle $C_e$ of $H$ containing~$e$, 
with the property that 
$\sum_{e \in H \setminus D} w(C_e) \leq 2w(H)$. Then,
$
w(H) \leq \left(1+\frac{2}{\epsilon}\right)w(D).
$
\end{restatable}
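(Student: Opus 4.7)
The plan is to combine the hypothesis of the lemma with the no-short-cycles property of the greedy spanner recorded in Observation~\ref{obs:longcycle}. First, I would apply that observation to each of the assigned cycles: for every $e \in H \setminus D$, since $C_e$ is a cycle of $H$ containing $e$, we get $w(C_e \setminus \{e\}) > (1+\epsilon)w(e)$, and hence
\begin{equation*}
w(C_e) = w(C_e \setminus \{e\}) + w(e) > (2+\epsilon)\, w(e),
\end{equation*}
so $w(e) < w(C_e)/(2+\epsilon)$.

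Next, I would sum this inequality over all $e \in H \setminus D$ and plug in the hypothesis $\sum_{e \in H \setminus D} w(C_e) \leq 2w(H)$ to obtain
\begin{equation*}
w(H \setminus D) \;=\; \sum_{e \in H \setminus D} w(e) \;<\; \frac{1}{2+\epsilon}\sum_{e \in H \setminus D} w(C_e) \;\leq\; \frac{2\, w(H)}{2+\epsilon}.
\end{equation*}

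Finally, I would split the total weight of $H$ according to whether edges lie in $D$: since $w(H \cap D) \leq w(D)$ trivially, we have
\begin{equation*}
w(H) \;=\; w(H \cap D) + w(H \setminus D) \;\leq\; w(D) + \frac{2\, w(H)}{2+\epsilon}.
\end{equation*}
Rearranging gives $w(H)\cdot \frac{\epsilon}{2+\epsilon} \leq w(D)$, i.e.\ $w(H) \leq \bigl(1+\tfrac{2}{\epsilon}\bigr) w(D)$, as desired.

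The argument is really just a weighted counting argument, so I do not anticipate a serious obstacle; the only thing to be careful about is that the inequality from Observation~\ref{obs:longcycle} must be applied once per edge $e \in H \setminus D$ with its designated cycle $C_e$, and the hypothesis of the lemma is tailored exactly so that the double-sum collapses to $2w(H)$. No use is made of a particular structure of $D$ beyond $w(H \cap D) \leq w(D)$, which is why later applications (planar and bounded-genus) will only need to exhibit a suitable assignment $e \mapsto C_e$ of cycles, such as the facial cycles produced by Lemma~\ref{lem:facialcycles}.
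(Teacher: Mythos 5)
Your proof is correct and follows the same approach as the paper: apply Observation~\ref{obs:longcycle} to each assigned cycle $C_e$, sum over $e \in H \setminus D$, use the hypothesis $\sum_{e} w(C_e) \leq 2w(H)$, and rearrange. The only difference is a minor algebraic reorganization — you bound $w(e) < w(C_e)/(2+\epsilon)$ directly rather than working with $w(C_e \setminus \{e\})$ and collecting the $w(e)$ terms afterward — and, as a small bonus, your version only uses $w(H \cap D) \leq w(D)$ rather than the identity $w(H) = w(H\setminus D) + w(D)$, so it matches the lemma's hypothesis (``$D$ a subgraph of $G$'') slightly more faithfully than the paper's own computation, which tacitly assumes $D \subseteq H$ (as indeed holds in both applications).
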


\begin{proof}
We obtain
\begin{align*}
w(H \setminus D) &= \sum\limits_{e \in H \setminus D} w(e)
\overset{\text{Obs \ref{obs:longcycle}}}{<} \frac{1}{1+\epsilon}\sum\limits_{e \in H \setminus D} w(C_e \setminus \{e\})\\
&=\frac{1}{1+\epsilon} \left( \sum\limits_{e \in H \setminus D} w(C_e) - \sum\limits_{e \in H \setminus D} w(e)\right)\\
&\leq \frac{1}{1+\epsilon} \left( 2 w(H) - w(H \setminus D ) \right)
= \frac{1}{1+\epsilon} \left( 2 w(D) + w(H \setminus D ) \right).
\end{align*}
Rearranging yields
\begin{equation}\label{eq:HminusMST}
w(H \setminus D ) \leq \frac{2}{\epsilon} \cdot w(D)
\end{equation}
and thus
\begin{equation*}
w(H) =w(H \setminus D) + w(D) 
\overset{\eqref{eq:HminusMST}}{\leq} \left( 1+\frac{2}{\epsilon} \right) w(D).\qedhere
\end{equation*}
\end{proof}

Next, we show how this implies \cref{thm:planarspanner}.
\begin{proof}[Proof of \cref{thm:planarspanner}]
Let $G$ be a planar graph, let $H$ be the greedy $(1+\epsilon)$-spanner of $G$, and let $\mst$ denote a minimum spanning tree of $H$. By \cref{greedyMST}, $\mst$ is also a minimum spanning tree of $G$, so that it suffices to show ${w(H)\leq \left( 1+\frac{2}{\epsilon}\right) w(\mst)}$.
Since $G$ is planar, its subgraph $H$ is planar as well. Let us fix an embedding of $H$ on the plane such that no two edges cross. By \cref{lem:facialcycles}, there is a facial cycle $C_e$ for every edge $e \in H \setminus \mst$ such that $C_e \neq C_{e'}$ for $e \neq e'$. As every edge of $H$ is contained in at most two facial cycles, we have $\sum_{e \in H \setminus \mst} w(C_e) \leq 2 w(H)$. Therefore, we can apply \cref{lem:spannercomb} with $D=\mst$ and obtain
$
w(H) \leq \left( 1+\frac{2}{\epsilon} \right) w(\mst). \qedhere
$
\end{proof}

\subsection{Generalization to bounded genus graphs}\label{sec:genaral:to:bd:genus}

The \emph{genus} of a graph $G$ is the smallest integer $g$ such that $G$ can be embedded on an orientable surface of genus $g$. 
In this subsection, we study light spanners for the class of bounded genus graphs and prove \cref{thm:genusspanner}. 
We begin by recalling the theorem.

\newtheorem{restatedtheorem}{Theorem}
\setcounter{restatedtheorem}{\value{restatecounter}}

\begin{restatedtheorem}[\text{restated}]
For every $\epsilon>0$, the greedy $(1+\epsilon)$-spanner of a graph of genus $g$ has lightness at most $\bigl( 1+\frac{2}{\epsilon} \bigl) \bigl(1+\frac{2g}{1+\epsilon} \bigl)$.
\end{restatedtheorem}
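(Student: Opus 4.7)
The plan is to generalize the proof of \cref{thm:planarspanner} by replacing the planar facial-cycle argument (\cref{lem:facialcycles}) with a topological tree–cotree decomposition of the embedded spanner. Let $H$ be the greedy $(1+\epsilon)$-spanner and $\mst$ a minimum spanning tree of $H$; by \cref{greedyMST}, $\mst$ is also a minimum spanning tree of $G$, so it suffices to bound $w(H)/w(\mst)$. First, I would fix a cellular embedding of $H$ on an orientable surface of some genus $g'\le g$ (such an embedding exists because $H\subseteq G$ has genus at most $g$, and a minimum-genus embedding is always cellular). Writing $|F|$ for the number of faces, Euler's formula gives $|V|-|E(H)|+|F|=2-2g'$. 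It is a standard fact that the edges of $H\setminus \mst$, viewed as edges of the dual graph $H^*$, form a spanning connected subgraph of $H^*$; pick a spanning tree $C^*$ of this subgraph (so $|C^*|=|F|-1$) and let $X := H\setminus(\mst\cup C^*)$, which by Euler has exactly $2g'$ edges.

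Next, I would root the dual tree $C^*$ at an arbitrary face $f_0$, yielding a bijection $e\mapsto C_e$ from $C^*$ onto the non-root faces $F\setminus\{f_0\}$, where $C_e$ is the child endpoint of $e$ in the rooted dual tree. Because $C^*$ is a tree its edges are not dual self-loops, so each primal edge $e\in C^*$ lies on the boundary walk of $C_e$ and appears there exactly once. To bound $w(X)$, I would apply \cref{obs:longcycle} to the fundamental cycle $\gamma_x$ formed by each $x\in X$ with the $\mst$-path between its endpoints: this gives $(1+\epsilon)w(x)<w(\gamma_x\setminus\{x\})\le w(\mst)$. Summing over $X$ yields $w(X)\le \tfrac{2g'}{1+\epsilon}w(\mst)\le \tfrac{2g}{1+\epsilon}w(\mst)$, so setting $D:=\mst\cup X$ we obtain $w(D)\le \bigl(1+\tfrac{2g}{1+\epsilon}\bigr)w(\mst)$.

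To conclude, I would invoke \cref{lem:spannercomb} with this $D$ and the $C_e$ defined above. In any cellular embedding each edge of $H$ lies on at most two face-boundaries (counted with multiplicity), so $\sum_{e\in H\setminus D}w(C_e)\le \sum_{f\in F}w(\partial f)=2w(H)$, matching the lemma's hypothesis. The resulting bound $w(H)\le \bigl(1+\tfrac{2}{\epsilon}\bigr)w(D)\le \bigl(1+\tfrac{2}{\epsilon}\bigr)\bigl(1+\tfrac{2g}{1+\epsilon}\bigr)w(\mst)$ is exactly the claim.

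The main technical obstacle is that $C_e$, as a face-boundary walk, is not guaranteed to be a simple cycle if $H$ has cut vertices, whereas \cref{obs:longcycle} and \cref{lem:spannercomb} are stated for cycles. However, because $e\in C^*$ is not a dual self-loop, $C_e$ is a closed walk containing $e$ exactly once, so $C_e\setminus\{e\}$ is a $u$-$v$-walk of weight at least $d_H(u,v)$; the arguments behind \cref{obs:longcycle} and \cref{lem:spannercomb} then carry over verbatim with ``cycle'' interpreted as ``closed walk.''
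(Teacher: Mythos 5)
Your proof is correct, and it takes a genuinely different route from the paper's. The paper constructs $D$ via a bespoke topological lemma (\cref{lem:constructionD}): iteratively add to $\mst$ edges whose fundamental cycles are non-separating until the resulting subgraph bounds a topological disk, and then assign facial cycles to the remaining edges by an iterative ``cutting'' procedure inside that disk (extending \cref{lem:facialcycles}). You instead invoke the tree--cotree decomposition of a cellular embedding, which delivers both ingredients in one stroke: Euler's formula gives $|X|=2g'$ immediately, and rooting the dual spanning tree $C^*$ yields a canonical injection $e\mapsto C_e$ from $H\setminus D = C^*$ into $F\setminus\{f_0\}$ with each face used at most once. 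Both proofs then feed the same two facts — $w(D)\le\bigl(1+\tfrac{2g}{1+\epsilon}\bigr)w(\mst)$ via \cref{obs:longcycle} applied to the $\le 2g$ fundamental cycles, and $\sum_{e\in H\setminus D}w(C_e)\le 2w(H)$ — into \cref{lem:spannercomb}. Your route is arguably cleaner, as it avoids the greedy topological construction of $D$ and the inductive argument for the facial-cycle assignment, trading them for a single standard fact from topological graph theory. The one point you rightly flag — that $C_e$ may be a closed face-boundary walk rather than a simple cycle when $H$ has cut vertices — is handled correctly: since $e$ is not a dual self-loop it occurs exactly once on the walk, so $C_e\setminus\{e\}$ is a $u$--$v$ walk in $H-e$ and contains a simple $u$--$v$ path $P$; then $P\cup\{e\}$ is a genuine cycle to which \cref{obs:longcycle} applies, giving $w(C_e\setminus\{e\})\ge w(P)>(1+\epsilon)w(e)$, which is all \cref{lem:spannercomb} uses. (The same closed-walk subtlety is implicitly present in the paper's disk-cutting argument, so your treatment is no less rigorous.)
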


Our proof is based on similar arguments as in  \cite[Theorem 2]{megow} and the main idea is roughly as follows: Given an embedding of the greedy spanner on a surface of genus $g$, first cut the surface along several edges such that we obtain a disk. 
Then, we can proceed along similar lines as for \cref{thm:planarspanner}. In this work, we estimate more carefully the weight of the edges along which we cut so that we obtain a slightly improved bound than in \cite{megow}.
We will use the following topological lemma for the first step.
\begin{lemma}\label{lem:constructionD}
 Let $G$ be an unweighted connected graph of genus (exactly) $g\geq 1$.
Fix an embedding of $G$ on an orientable surface of genus $g$ and let $T$ be a spanning tree of $G$. Then, there is a subgraph $D$ of $G$ with $T \subset D$ and $|E(D) \setminus E(T)|\leq 2g$ such that, in the inherited embedding of $D$, there is only a single face and the edges in $D$ bound a topological disk.\footnote{A topological disk is a surface homeomorphic to a 2-dimensional disk. Intuitively, a topological disk is a continuous deformation of a 2-dimensional disk.}
\end{lemma}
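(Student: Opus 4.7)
The plan is to use planar-style duality adapted to the genus-$g$ setting. Since $G$ has genus exactly $g$, the given embedding on a surface $S$ of genus $g$ is a $2$-cell embedding, so every face of $G$ is an open topological disk and Euler's formula yields $|V(G)| - |E(G)| + F = 2 - 2g$, where $F$ is the number of faces. Let $G^*$ denote the dual graph and let $H^* \subseteq G^*$ be the subgraph consisting of all dual edges corresponding to non-tree edges $E(G) \setminus E(T)$. I would first show that $H^*$ is a spanning connected subgraph of $G^*$: spanning is by construction, and for connectedness, $T$ is a contractible $1$-complex in the connected surface $S$, so its regular neighborhood is a disk and therefore $S \setminus T$ is connected. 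Consequently any two faces of $G$ can be joined by a path in $S \setminus T$ that crosses only edges of $E(G) \setminus E(T)$, which translates into a walk in $H^*$.

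A direct count via Euler's formula gives $|E(H^*)| = |E(G)| - (|V(G)| - 1) = F - 1 + 2g$, so the cyclomatic number of $H^*$ is exactly $2g$. I would then fix any spanning tree $T^*$ of $H^*$ and let $B \subseteq E(G) \setminus E(T)$ be the set of $2g$ primal edges whose duals are the excess edges $E(H^*) \setminus E(T^*)$. Setting $D := T \cup B$ immediately gives $T \subseteq D$ and $|E(D) \setminus E(T)| = 2g$, matching the required bound. By construction, the primal edges \emph{not} in $D$ are exactly those whose duals form the spanning tree $T^*$ of $G^*$.

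The remaining task is to verify that the inherited embedding of $D$ in $S$ has a single face and that this face is a topological disk. Removing a primal edge merges the two faces adjacent to it, which is dual to contracting the corresponding dual edge. Contracting all of $T^*$ (a spanning tree of $G^*$) collapses the dual to a single vertex, so $D$ has exactly one face. To show this face is a disk, I would process the edges of $T^*$ one by one in an order that always contracts a bridge of the partially-contracted tree; this is possible at every step since $T^*$ is a tree, so each yet-uncontracted edge still joins two distinct vertices of the current quotient. Primal-wise, each step glues two currently distinct disk faces along a single edge, which yields a disk. Iterating, the $F$ initial disk faces of $G$'s cellular embedding are amalgamated into a single disk after $F - 1$ gluings, as required.

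The main subtlety lies in the last step: a naive face-merging order could glue a disk to itself along two disjoint arcs of its boundary, producing a cylinder or even a Möbius band rather than a disk. Ensuring that the two faces being merged are always distinct is exactly what the spanning-tree structure of $T^*$ buys us, and orientability of $S$ rules out non-orientable gluings. Once this is justified, the Euler characteristic bookkeeping and the edge count $|B| = 2g$ close the proof.
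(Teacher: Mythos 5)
Your proof is correct and takes a genuinely different route from the paper. The paper constructs $D$ greedily: starting from $T$, it repeatedly adds an edge that closes only non-separating cycles, and bounds $|E(D)\setminus E(T)|$ by the black-box topological fact that a closed genus-$g$ surface carries at most $2g$ pairwise non-separating curves; it then shows $D$ bounds a disk by arguing from maximality of $D$ that every remaining edge is incident to two distinct faces. You instead use the classical \emph{tree--cotree} decomposition of a cellularly embedded graph: let $H^*$ be the set of duals of the non-tree edges, observe that $H^*$ spans and connects the dual $G^*$ (because $T$ is contractible, so $S\setminus T$ is connected), pick a spanning cotree $T^*\subseteq H^*$, and take $D=T\cup(\text{primals of } H^*\setminus T^*)$. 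Euler's formula then gives $|E(D)\setminus E(T)|=2g$ exactly, and the single-disk conclusion follows by contracting $T^*$ in the dual one edge at a time, using that contracting edges of a tree never creates loops, so every removal merges two \emph{distinct} disk faces along a single boundary arc. Your approach is more combinatorial and self-contained: it replaces the ``$2g$ non-separating curves'' fact by an Euler-characteristic count, makes the construction of $D$ canonical rather than greedy, and makes explicit exactly which structural property (cotree edges being non-loops throughout the contraction) rescues the face-merging step. One small inaccuracy worth noting: your appeal to orientability of $S$ in the final paragraph is not actually needed; once the two faces being merged are distinct, gluing two disks along a single common boundary arc always yields a disk, on any surface — the non-orientable gluing you worry about can only occur when a face is merged with itself, and the cotree structure already rules that out.
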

\begin{proof}
It is a standard fact from topology that, on a surface of genus $g$, one can embed precisely $2g$ closed curves that are non-separating, i.e., it is possible to draw $2g$ cycles on the surface such that cutting along all of them does not disconnect the surface. Every collection of $2g$ curves that are non-separating bounds a topological disk (see \cref{fig:doubletorus}).\footnote{This can be proven as follows: The Euler characteristic of a surface of genus $g$ is $2-2g$ \cite[Section 2.2]{hatcher} and cutting along a non-separating closed curve increases the Euler characteristic by 1.}
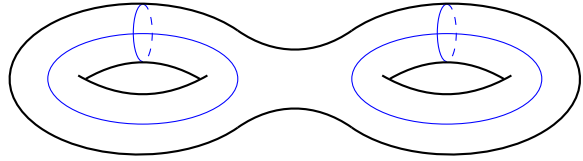
\begin{figure}
\begin{center}
\begin{tikzpicture}[scale=0.5]
\draw[blue] (-2.5,0.7) arc(90:270:0.25cm and 0.76cm);
\draw[blue, dashed] (-2.5,0.7) arc(90:-90:0.25cm and 0.76cm);
\draw[blue] (5.5,0.7) arc(90:270:0.25cm and 0.76cm);
\draw[blue, dashed] (5.5,0.7) arc(90:-90:0.25cm and 0.76cm);
\draw[blue] (0,-1.28) arc(0:360:2.5cm and 1.2cm);
\draw[blue] (3,-1.28) arc(180:-180:2.5cm and 1.2cm);
\draw[thick] (0,0) arc(40:320:3.4cm and 2cm);
\draw[bend right=35, thick] (0,0) to (3,0);
\draw[bend left=35, thick] (0,-2.57) to (3,-2.57);
\draw[thick] (3,0) arc(140:-140:3.4cm and 2cm);
\draw[-, bend left, thick] (-4,-1.28) to (-1, -1.28);
\draw[-, bend right, thick] (-4.2,-1.19) to (-0.8, -1.19);
\draw[-, bend right, thick] (7,-1.28) to (4, -1.28);
\draw[-, bend left, thick] (7.2,-1.19) to (3.8, -1.19);
\end{tikzpicture}
\end{center}
\caption{surface of genus 2 with 4 non-separating cycles bounding a topological disk}\label{fig:doubletorus}
\end{figure}

We construct the set $D$ greedily as follows (see \cref{fig:construction}): Initially, let $D:=T$. Ignoring all edges in $G\setminus D$, we have only a single face. 
Note that every edge in $G \setminus D$ closes a cycle with $D$. 
If we find an edge which only closes non-separating cycles, i.e, does not separate the surface into two faces, we add it to $D$. After this, the edges of $D$ still only bound a single face. We repeat this step until we cannot find further edges whose addition would separate the surface into multiple faces. 

Since there are at most $2g$ cycles on a surface of genus $g$ that are non-separating, we have $|E(D) \setminus E(T)|\leq 2g$. It is left to show that $D$ bounds a disk. By maximality of $D$, every edge $e \in G \setminus D$ is separating when added to $D$, i.e., in the inherited embedding of $D \cup \{e\}$, the edge $e$ is incident to two faces. In particular, $e$ is incident to two faces in the inherited embedding of every supergraph of $D$. 

Consider again the embedding of the entire graph $G$. 
It is known from topological graph theory that a minimal genus embedding of a connected graph is cellular, i.e., every face of the embedding of $G$ is a topological disk \cite{youngs63} (see \cite[Proposition 3.4.1]{mohar}). 
Since every edge $e \in G \setminus D$ is incident to two distinct faces, its removal merges the two corresponding disks along a connected part of their common boundary, which yields another disk.
Iteratively removing all edges in $G \setminus D$ in this way, we thus obtain a cellular embedding of $D$. 
Since, by construction, $D$ induces only a single face, we obtain that $D$ bounds a topological disk.

For an illustration of the construction, consider \cref{fig:construction}. 
In the example in the left column, the two green edges enclose non-separating cycles, whereas all blue edges close separating cycles. 
In the example in the right column, the half-dotted green edge in $D$ could be replaced by the blue edge between $u$ and $v$.
\end{proof}

\setlength{\belowcaptionskip}{-15pt}
\begin{figure}[h]
\begin{minipage}{0.5\textwidth}
\begin{center}
\begin{tikzpicture}[scale=1.4]
\node [ellipse, minimum width=175pt, minimum height=107pt, draw, thick] at (0,0) {}; 
\draw[-, bend left, thick] (-1,0.05) to (1,0.05);
\draw[-, bend right, thick] (-1.1,0.1) to (1.1,0.1);
\node (A) at (0.8,-0.6) [circle, draw, inner sep=0, minimum size=10pt, thick] {\small{$y$}};
\node (B) at (1.5,0) [circle, draw, inner sep=0, minimum size=10pt] {\small{$z$}};
\draw[-, bend right=10] (A) to (B);
\node (C) at (0.8, 0.5) [circle, draw, inner sep=0, minimum size=10pt] {\small{$v$}};
\draw[-, bend right=10] (B) to (C);
\node (D) at (0, 0.65) [circle, draw, inner sep=0, minimum size=10pt] {};
\node (E) at (0, 1.05) [circle, draw, inner sep=0, minimum size=10pt] {};
\node (Eh) at (0.45, 0.9) [circle, draw, inner sep=0, minimum size=10pt] {\small{$u$}};
\draw[-, bend right=10] (C) to (D);
\draw[-] (D) to (Eh);
\draw[-] (E) to (Eh);
\draw[thick, green!60!black] (0,1.16) arc(39:90:0.16cm and 0.49cm);
\draw[green!60!black, dashed, thick] (-0.13, 1.33) arc(90:270:0.16cm and 0.49cm);
\draw[thick, green!60!black] (-0.13, 0.34) arc(270:323:0.16cm and 0.49cm);
\node (F) at (-0.8, 0.5) [circle, draw] {};
\draw[-, bend right=10]   (D) to (F);
\node (G) at (-1.5, 0) [circle, draw, inner sep=0, minimum size=10pt] {};
\draw[-, bend right=10, thick, green!60!black]   (F) to (G);
\node (H) at (-0.8, -0.6) [circle, draw, inner sep=0, minimum size=10pt] {\small{$x$}};
\node (I) at (0, -0.8) [circle, draw, inner sep=0, minimum size=10pt] {};
\draw[-, bend right=10] (G) to (H);
\draw[-, bend right=10] (H) to (I);
\draw[-, bend right=10] (I) to (A);
\draw[-, bend right=4, blue, thick] (B) to (H);
\draw[-, bend right=10, blue, thick] (A) to (H);
\draw[-, thick, blue] (Eh) to (C);
\end{tikzpicture}
\end{center}
\begin{center}
\begin{tikzpicture}[scale=1.3]
\draw[, thick] (-0.5,0) to (4,0) [postaction={decorate,decoration={markings,
        mark=between positions 0.49 and 0.52 step 0.02 with {\arrow[thick]{>};}
        }}];
\draw[-, thick] (-0.5,0) to (-0.5,2) [postaction={decorate,decoration={markings,
        mark=at position 0.5 with {\arrow[thick]{>};}
        }}];
\draw[-, thick] (4,0) to (4,2) [postaction={decorate,decoration={markings,
        mark=at position 0.5 with {\arrow[thick]{>};}
        }}];
\draw[-, thick] (-0.5,2) to (4,2) [postaction={decorate,decoration={markings,
        mark=between positions 0.49 and 0.52 step 0.02 with {\arrow[thick]{>};}
        }}];
\node (A) at (0.5,0.5) [circle, draw, inner sep=0, minimum size=10pt] {};
\node (u) at (0.5,1)[circle, draw, inner sep=0, minimum size=10pt] {\small{$u$}};
\node (B) at (0.5,1.5)[circle, draw, inner sep=0, minimum size=10pt] {};
\draw[-] (A) to (u);
\draw[-] (B) to (u);
\node (v) at (1, 0.5)[circle, draw, inner sep=0, minimum size=10pt] {\small{$v$}};
\draw[-, thick, blue] (u) to (v);
\draw[-] (A) to (v);
\node (z) at (1.5, 0.5)[circle, draw, inner sep=0, minimum size=10pt] {\small{$z$}};
\draw[-] (v) to (z);
\node (y) at (2, 0.5)[circle, draw, inner sep=0, minimum size=10pt] {\small{$y$}};
\draw[-] (z) to (y);
\node (C) at (2.5, 0.5)[circle, draw, inner sep=0, minimum size=10pt] {};
\draw[-] (y) to (C);
\node (x) at (3, 0.5)[circle, draw, inner sep=0, minimum size=10pt] {\small{$x$}};
\draw[-] (C) to (x);
\draw[-, thick, blue, bend right=35] (y) to (x);
\draw[-, thick, blue, bend right=45] (z) to (x);
\node (D) at (3.5,0.5) [circle, draw, inner sep=0, minimum size=10pt] {};
\node (E) at (0,0.5) [circle, draw, inner sep=0, minimum size=10pt] {};
\draw[-, thick, green!60!black] (-0.5,0.5) to (E);
\draw[-] (A) to (E);
\draw[-] (x) to (D);
\draw[-, thick, green!60!black] (D) to (4,0.5);
\draw[-, thick, green!60!black] (B) to (0.5,2);
\draw[-, thick, green!60!black] (A) to (0.5,0);
\node at (-0.25, 0.25) {\small{$B$}};
\node at (-0.25, 1.5) {\small{$A$}};
\node at (3.25, 0.25) {\small{$C$}};
\node at (3.25, 1.5) {\small{$D$}};
\end{tikzpicture}
\end{center}
\begin{center}
\begin{tikzpicture}[scale=1.3]
\node (A) at (0.5,0.5) [circle, draw, inner sep=0, minimum size=10pt] {};
\node (u) at (0.5,1)[circle, draw, inner sep=0, minimum size=10pt] {\small{$u$}};
\node (B) at (0.5,1.5)[circle, draw, inner sep=0, minimum size=10pt] {};

\draw[-] (A) to (u);
\draw[-] (B) to (u);
\node (v) at (1, 0.5)[circle, draw, inner sep=0, minimum size=10pt] {\small{$v$}};
\draw[-, thick, blue] (u) to (v);
\draw[-] (A) to (v);
\node (z) at (1.5, 0.5)[circle, draw, inner sep=0, minimum size=10pt] {\small{$z$}};
\draw[-] (v) to (z);
\node (y) at (2, 0.5)[circle, draw, inner sep=0, minimum size=10pt] {\small{$y$}};
\draw[-] (z) to (y);
\node (C) at (2.5, 0.5)[circle, draw, inner sep=0, minimum size=10pt] {};
\draw[-] (y) to (C);
\node (x) at (3, 0.5)[circle, draw, inner sep=0, minimum size=10pt] {\small{$x$}};
\draw[-] (C) to (x);
\node (D) at (3.5,0.5) [circle, draw, inner sep=0, minimum size=10pt] {};
\draw[-] (x) to (D);
\node (A2) at (0.5,2) [circle, draw, inner sep=0, minimum size=10pt] {};
\draw[-, thick, green!60!black] (B) to (A2);
\node (v2) at (1, 2)[circle, draw, inner sep=0, minimum size=10pt] {\small{$v$}};
\draw[-] (A2) to (v2);
\node (z2) at (1.5, 2)[circle, draw, inner sep=0, minimum size=10pt] {\small{$z$}};
\draw[-] (v2) to (z2);
\node (y2) at (2, 2)[circle, draw, inner sep=0, minimum size=10pt] {\small{$y$}};
\draw[-] (z2) to (y2);
\node (C2) at (2.5, 2)[circle, draw, inner sep=0, minimum size=10pt] {};
\draw[-] (y2) to (C2);
\node (x2) at (3, 2)[circle, draw, inner sep=0, minimum size=10pt] {\small{$x$}};
\draw[-] (C2) to (x2);
\draw[-, thick, blue, bend right=28] (y2) to (x2);
\draw[-, thick, blue, bend right=30] (z2) to (x2);
\node (D2) at (3.5,2) [circle, draw, inner sep=0, minimum size=10pt] {};
\draw[-] (x2) to (D2);
\node (E) at (4,0.5) [circle, draw, inner sep=0, minimum size=10pt] {};
\draw[-, thick, green!60!black] (D) to (E);
\node (A3) at (4.5,0.5) [circle, draw, inner sep=0, minimum size=10pt] {};
\draw[-] (E) to (A3);
\node (u2) at (4.5,1) [circle, draw, inner sep=0, minimum size=10pt] {\small{$u$}};
\draw[-] (A3) to (u2);
\node (B2) at (4.5,1.5) [circle, draw, inner sep=0, minimum size=10pt] {};
\draw[-] (u2) to (B2);
\node (A4) at (4.5,2) [circle, draw, inner sep=0, minimum size=10pt] {};
\draw[-, thick, green!60!black] (B2) to (A4);
\node (E2) at (4,2) [circle, draw, inner sep=0, minimum size=10pt] {};
\draw[-] (E2) to (A4);
\draw[-, thick, green!60!black] (E2) to (D2);
\draw[dashed] (0.5, 1.7) to (4.5, 1.7);
\draw[dashed] (3.75, 0.5) to (3.75, 2);
\node at (0.75,1.85) {\small{$C$}};
\node at (1,1) {\small{$D$}};
\node at (4,1) {\small{$A$}};
\node at (4.25,1.85) {\small{$B$}};
\end{tikzpicture}
\end{center}
\end{minipage}
\begin{minipage}{0.5\textwidth}
\begin{center}
\begin{tikzpicture}[scale=1.4]
\node [ellipse, minimum width=175pt, minimum height=107pt, draw, thick] at (0,0) {}; 
\draw[-, bend left, thick] (-1,0.05) to (1,0.05);
\draw[-, bend right, thick] (-1.1,0.1) to (1.1,0.1);
\node (A) at (0.8,-0.6) [circle, draw, inner sep=0, minimum size=10pt] {\small{$y$}};
\node (B) at (1.5,0) [circle, draw, inner sep=0, minimum size=10pt] {\small{$z$}};
\draw[-, bend right=10] (A) to (B);
\node (C) at (0.8, 0.5) [circle, draw, inner sep=0, minimum size=10pt] {\small{$v$}};
\draw[-, bend right=10] (B) to (C);
\node (D) at (0, 0.65) [circle, draw, inner sep=0, minimum size=10pt] {};
\node (E) at (0, 1.05) [circle, draw, inner sep=0, minimum size=10pt] {};
\node (Eh) at (0.45, 0.9) [circle, draw, inner sep=0, minimum size=10pt] {\small{$u$}};
\draw[-, bend right=10] (C) to (D);
\draw[-] (D) to (Eh);
\draw[-] (E) to (Eh);
\draw[thick, green!60!black] (0,1.16) arc(39:90:0.16cm and 0.49cm);
\draw[green!60!black, dashed, thick] (-0.13, 1.33) arc(90:270:0.16cm and 0.49cm);
\draw[thick, green!60!black] (-0.13, 0.34) arc(270:323:0.16cm and 0.49cm);
\node (F) at (-0.8, 0.5) [circle, draw] {};
\draw[-, bend right=10]   (D) to (F);
\node (G) at (-1.5, 0) [circle, draw, inner sep=0, minimum size=10pt] {};
\draw[-, bend right=10, thick, green!60!black]   (F) to (G);
\node (H) at (-0.8, -0.6) [circle, draw, inner sep=0, minimum size=10pt] {\small{$x$}};
\node (I) at (0, -0.8) [circle, draw, inner sep=0, minimum size=10pt] {};
\draw[-, bend right=10] (G) to (H);
\draw[-, bend right=10] (H) to (I);
\draw[-, bend right=10] (I) to (A);
\draw[-, bend right=0, blue, thick] (B) to (H);
\draw[-, bend left=60, blue, thick] (A) to (H);
\draw[-, thick, blue] (0.45, 1) arc(153:90:0.095cm and 0.56cm);
\draw[-, thick, blue] (0.8, 0.39) arc(-39:-90:0.095cm and 0.56cm);
\draw[-, thick, blue, dashed] (0.54, 1.3) arc(90:-4:0.095cm and 0.56cm);
\draw[-, thick, blue, dashed] (0.72, 0.2) arc(-90:-174:0.095cm and 0.56cm);
\node  at (Eh) [circle, draw, inner sep=0, minimum size=10pt, fill=white] {\small{$u$}};
\node at(C)  [circle, draw, inner sep=0, minimum size=10pt, fill=white] {\small{$v$}};
\end{tikzpicture}
\end{center}
\begin{center}
\begin{tikzpicture}[scale=1.3]
\draw[-, thick] (-0.5,0) to (4,0) [postaction={decorate,decoration={markings,
        mark=between positions 0.49 and 0.52 step 0.02 with {\arrow[thick]{>};}
        }}];
\draw[-, thick] (-0.5,0) to (-0.5,2) [postaction={decorate,decoration={markings,
        mark=at position 0.5 with {\arrow[thick]{>};}
        }}];
\draw[-, thick] (4,0) to (4,2) [postaction={decorate,decoration={markings,
        mark=at position 0.5 with {\arrow[thick]{>};}
        }}];
\draw[-, thick] (-0.5,2) to (4,2) [postaction={decorate,decoration={markings,
        mark=between positions 0.49 and 0.52 step 0.02 with {\arrow[thick]{>};}
        }}];
\node (A) at (0.5,0.5) [circle, draw, inner sep=0, minimum size=10pt] {};
\node (u) at (0.5,1)[circle, draw, inner sep=0, minimum size=10pt] {\small{$u$}};
\node (B) at (0.5,1.5)[circle, draw, inner sep=0, minimum size=10pt] {};
\draw[-] (A) to (u);
\draw[-] (B) to (u);
\node (v) at (1, 0.5)[circle, draw, inner sep=0, minimum size=10pt] {\small{$v$}};
\draw[-, thick, blue, bend right=5] (u) to (0.9,2);
\draw[- , thick, blue] (0.9,0) to (v);
\draw[-] (A) to (v);
\node (z) at (1.5, 0.5)[circle, draw, inner sep=0, minimum size=10pt] {\small{$z$}};
\draw[-] (v) to (z);
\node (y) at (2, 0.5)[circle, draw, inner sep=0, minimum size=10pt] {\small{$y$}};
\draw[-] (z) to (y);
\node (C) at (2.5, 0.5)[circle, draw, inner sep=0, minimum size=10pt] {};
\draw[-] (y) to (C);
\node (x) at (3, 0.5)[circle, draw, inner sep=0, minimum size=10pt] {\small{$x$}};
\draw[-] (C) to (x);
\draw[-, thick, blue, bend left=35] (y) to (x);
\draw[-, thick, blue, bend right=45] (z) to (x);
\node (D) at (3.5,0.5) [circle, draw, inner sep=0, minimum size=10pt] {};
\draw[-] (x) to (D);
\draw[-, thick, green!60!black] (D) to (4,0.5);
\draw[-, thick, green!60!black] (B) to (0.5,2);
\draw[-, thick, green!60!black] (A) to (0.5,0);
\node (E) at (0,0.5) [circle, draw, inner sep=0, minimum size=10pt] {};
\draw[-] (A) to (E);
\draw[-, thick, green!60!black] (-0.5,0.5) to (E);
\node at (-0.25, 0.25) {\small{$B$}};
\node at (-0.25, 1.5) {\small{$A$}};
\node at (3.25, 0.25) {\small{$C$}};
\node at (3.25, 1.5) {\small{$D$}};
\end{tikzpicture}
\end{center}
\begin{center}
\begin{tikzpicture}[scale=1.3]
\node (A) at (0.5,0.5) [circle, draw, inner sep=0, minimum size=10pt] {};
\node (u) at (0.5,1)[circle, draw, inner sep=0, minimum size=10pt] {\small{$u$}};
\node (B) at (0.5,1.5)[circle, draw, inner sep=0, minimum size=10pt] {};
\draw[-] (A) to (u);
\draw[-] (B) to (u);
\node (v) at (1, 0.5)[circle, draw, inner sep=0, minimum size=10pt] {\small{$v$}};
\draw[-] (A) to (v);
\node (z) at (1.5, 0.5)[circle, draw, inner sep=0, minimum size=10pt] {\small{$z$}};
\draw[-] (v) to (z);
\node (y) at (2, 0.5)[circle, draw, inner sep=0, minimum size=10pt] {\small{$y$}};
\draw[-] (z) to (y);
\node (C) at (2.5, 0.5)[circle, draw, inner sep=0, minimum size=10pt] {};
\draw[-] (y) to (C);
\node (x) at (3, 0.5)[circle, draw, inner sep=0, minimum size=10pt] {\small{$x$}};
\draw[-] (C) to (x);
\node (D) at (3.5,0.5) [circle, draw, inner sep=0, minimum size=10pt] {};
\draw[-] (x) to (D);
\node (A2) at (0.5,2) [circle, draw, inner sep=0, minimum size=10pt] {};
\draw[-, thick, green!60!black] (B) to (A2);
\node (v2) at (1, 2)[circle, draw, inner sep=0, minimum size=10pt] {\small{$v$}};
\draw[-] (A2) to (v2);
\node (z2) at (1.5, 2)[circle, draw, inner sep=0, minimum size=10pt] {\small{$z$}};
\draw[-] (v2) to (z2);
\node (y2) at (2, 2)[circle, draw, inner sep=0, minimum size=10pt] {\small{$y$}};
\draw[-] (z2) to (y2);
\node (C2) at (2.5, 2)[circle, draw, inner sep=0, minimum size=10pt] {};
\draw[-] (y2) to (C2);
\node (x2) at (3, 2)[circle, draw, inner sep=0, minimum size=10pt] {\small{$x$}};
\draw[-] (C2) to (x2);
\draw[-, thick, blue, bend left=30] (y) to (x);
\draw[-, thick, blue, bend right=30] (z2) to (x2);
\node (D2) at (3.5,2) [circle, draw, inner sep=0, minimum size=10pt] {};
\draw[-] (x2) to (D2);
\node (E) at (4,0.5) [circle, draw, inner sep=0, minimum size=10pt] {};
\draw[-, thick, green!60!black] (D) to (E);
\node (A3) at (4.5,0.5) [circle, draw, inner sep=0, minimum size=10pt] {};
\draw[-] (E) to (A3);
\node (u2) at (4.5,1) [circle, draw, inner sep=0, minimum size=10pt] {\small{$u$}};
\draw[-] (A3) to (u2);
\node (B2) at (4.5,1.5) [circle, draw, inner sep=0, minimum size=10pt] {};
\draw[-] (u2) to (B2);
\node (A4) at (4.5,2) [circle, draw, inner sep=0, minimum size=10pt] {};
\draw[-, thick, green!60!black] (B2) to (A4);
\node (E2) at (4,2) [circle, draw, inner sep=0, minimum size=10pt] {};
\draw[-] (E2) to (A4);
\draw[-, thick, green!60!black] (E2) to (D2);
\draw[dashed] (0.5, 1.7) to (4.5, 1.7);
\draw[dashed] (3.75, 0.5) to (3.75, 2);
\node at (0.75,1.85) {\small{$C$}};
\node at (1,1) {\small{$D$}};
\node at (4,1) {\small{$A$}};
\node at (4.25,1.85) {\small{$B$}};
\draw[-, thick, blue] (u) to (v2);
\end{tikzpicture}
\end{center}
\end{minipage}
\caption{The two columns show the construction of $D$ for the same graph with two different embeddings. The black edges belong to $T$, the green edges to $D \setminus T$, and the blue edges to $G \setminus D$. In each column, the first subfigure shows the embedding on the torus. The second subfigure shows a different representation: The torus is obtained by gluing together the opposite sides of the rectangle.
The last subfigure shows the disk obtained by cutting the surface along $D$. Note that it contains every edge of $D$ twice and therefore, every vertex up to 4 times. However, note that the embedding specifies between which copies of the vertices the blue edges have to be drawn.  The capital letters $A,B,C,D$ denote areas of the torus and are included for better orientation: Leaving area $A$ to the left leads to area $D$, leaving $A$ to the top leads to $B$ and so on.  } \label{fig:construction}
\end{figure}
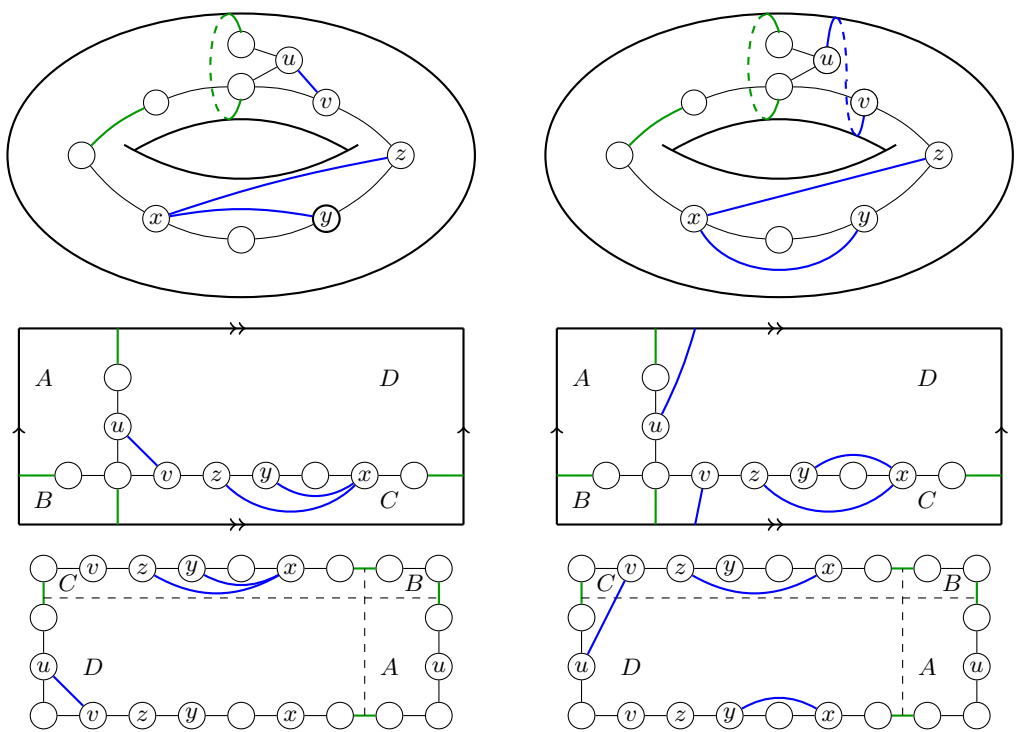

Now, we have all the prerequisites in place to prove \cref{thm:genusspanner}. 
The main idea is to give a similar construction as in \cref{lem:facialcycles} to partition the greedy spanner into facial cycles. Before delving into the proof, let us briefly comment on why \cref{lem:constructionD} is not a reduction to the planar case, i.e., we cannot use the same construction as in \cref{lem:facialcycles}.

Recall that the key ingredient of \cref{lem:facialcycles} was to define a partial order in which an edge~$e'$ precedes another edge $e$ if $e'$ is embedded on the inside of the cycle that $e$ closes with $\mst$. In the bounded genus case, if the cycle closed by $e$ is non-separating, there is no such thing as ``the inside'' of the cycle. For example, consider the edge $(u,v)$ in the right column of \cref{fig:construction} and the cycle it closes with $\mst$. This cycle does not have an ``inside'' and cannot be decomposed into multiple faces. In particular, the cycle disappears after cutting the surface along $D$. However, it separates the disk bounded by $D$ into two parts. Therefore, we  have to consider cycles that include edges of $D \setminus \mst$.

\begin{proof}[Proof of \cref{thm:genusspanner}]

Let $G$ be some graph of genus $g$. Let $H$ be the greedy $(1+\epsilon)$-spanner of $G$ and let MST denote a minimum spanning tree of $H$. By \cref{greedyMST}, we know that $\mst$ is also a minimum spanning tree of $G$, so that it suffices to show ${w(H)\leq \bigl( 1+\frac{2}{\epsilon}\bigl) \bigl(1+\frac{2g}{1+\epsilon}\bigl)w(\mst)}$.

Let $g'$ be the genus of $H$. If $g'=0$, the assertion follows directly by \cref{thm:planarspanner}. Therefore, we assume from now on $g'\geq 1$. Note that $g' \leq g$ because $H$ is a subgraph of $G$. Fix an embedding of $H$ on an orientable closed surface of genus $g'$ such that no two edges cross. 
By \cref{lem:constructionD}, there is a subgraph $D$ of $H$ with $\mst \subseteq D$ such that
 \begin{equation}\label{eq:2g}
 |E(D)\setminus E(\mst)| \leq 2g' \leq 2g
 \end{equation}  and such that the edges of $D$ induce only one face and bound a topological disk. 
Next, observe that, for every edge $e$ in $H\setminus \mst$, we have $w(e)\leq w(\mst)/(1+\epsilon)$: Every edge $e$ in $H \setminus \mst$ closes a cycle $C$ together with the edges of $\mst$. Using \cref{obs:longcycle}, we obtain
\begin{equation*}
w(e) < \frac{w(C\setminus \{e\})}{1+\epsilon} \leq \frac{w(\mst)}{1+\epsilon}.
\end{equation*} 
In particular, this is fulfilled for edges in $D \setminus \mst$. Combining this with \eqref{eq:2g}, we obtain
\begin{equation}\label{eq:weightofD}
w(D) \leq \left( 1+ \frac{2g}{1+\epsilon} \right) w(\mst).
\end{equation}
The next step is to bound the weight of $H$ by $(1+2/\epsilon)w(D)$. For this, we use a similar construction as in \cref{lem:facialcycles} and show that it is possible to iteratively choose an edge $e$ in $H \setminus D$ which, together with the edges of $D$ and the edges chosen in previous iterations, closes a facial cycle $C_e$ in the embedding of $H$.

In each iteration, we find a suitable edge as follows: Pick an arbitrary \hbox{edge $e$ of $H\setminus D$}. 
If it defines a facial cycle together with $D$ and edges chosen in previous iterations, we can simply choose $e$. 
Assume this is not the case. Note that $e$ cuts the disk bounded by $D$ in two parts and both contain edges in $H \setminus D$ to which no cycles have been assigned yet (otherwise~$e$ would close a suitable facial cycle). 
Pick the part whose boundary with $D$ contains fewer edges (breaking ties arbitrarily) and pick a new edge $e'$ in $H \setminus D$ which lies inside this half and has not yet been chosen in a previous iteration. Note that $e'$ again cuts the disk in two parts and the boundary of the smaller part contains fewer edges of $D$ than in the step before.  Therefore, by repeating the steps above, we will end up with a suitable edge after finitely many steps.  For example, on the left side of \cref{fig:construction}, if we pick $e=(x,z)$, we will set $e'=(x,y)$ and this edge is suitable. After this, we can assign a facial cycle to $(x,z)$ and then to $(u,v)$. In the instance on the right, we can assign the cycles to the blue edges in any order.

Note that, in this construction, no two edges are assigned the same facial cycle.
 As every edge is contained in at most two facial cycles, we have
\begin{equation}\label{eq:facecycles}
\sum\limits_{e \in H \setminus D} w(C_e) \leq 2 w(H).
\end{equation}
Therefore, we can now apply \cref{lem:spannercomb} and obtain
\begin{equation*}
w(H) \overset{\text{Lem \ref{lem:spannercomb}}}{\leq} \left(1+\frac{2}{\epsilon} \right) w(D)
\overset{\eqref{eq:weightofD}}{\leq} \left(1+\frac{2}{\epsilon} \right) \left(1+\frac{2g}{1+\epsilon} \right) w(\mst). \qedhere
\end{equation*}

\end{proof}

Recall that Grigni showed that every graph of genus $g\geq 1$ contains a $(1+\epsilon)$-spanner of lightness at most  $1+(12g-4)/\epsilon$ \cite{grigni}.
Let us briefly comment on how our bound compares to Grigni's bound. For this, note that, for $g\geq 1$,
\begin{equation*}
 \left( 1+\frac{2}{\epsilon} \right) \left( 1+ \frac{2g}{1+\epsilon} \right)
 = 1 + \frac{2}{\epsilon} + \frac{2g}{1+\epsilon} + \frac{4g}{\epsilon (1+\epsilon)} 
 < 1 + \frac{2g}{\epsilon} + \frac{2g}{\epsilon} + \frac{4g}{\epsilon} = 1+\frac{8g}{\epsilon}.
\end{equation*}
Therefore, our bound is stronger than Grigni's bound for every $g\geq 1$. Moreover, in \linebreak the  planar case (i.e., $g=0$), we obtain a lightness of $1+\frac{2}{\epsilon}$. It was shown by \linebreak Althöfer et al.~\cite[Theorem 5]{althoefer} that this is best possible, i.e., our bound is tight for planar graphs.
Note that the worst-case lightness for spanners of graphs of genus $g$ has to increase in $g$, since not every graph admits a light spanner.
For example, for every $k\geq 3$ and almost all $n$, there is a graph on $n$ vertices with girth at least~$k$ and at least $\frac{1}{4}n^{1+\frac{1}{k}}$ edges~\hbox{\cite[Theorem 6.6]{mitzenmacher}}.

\section{Open problems}
The key question in online graph exploration is whether the problem admits a constant-competitive algorithm~\cite{pruhs94}. 
While this problem remains open, our results suggest steps that might be needed towards a resolution of this question.
Firstly, we have shown that the online graph exploration problem allows for a constant-competitive algorithm on graphs admitting a light spanner, in particular, minor-free graphs. 
This suggests that, for proving a non-constant general lower bound on the competitive ratio, one might require dense high-girth graphs or expanders \cite{krivelevich}.
Not even a competitive ratio of $o(\log (n))$ has yet been attained, and our results eliminate $\blocking_\delta$, for most values of~$\delta$, as a candidate for achieving this.
It remains to close the gap between $\delta \in o(\log (n)/ \log \log (n))$ and $\delta \in \Omega(\log (n))$.

Regarding spanners, we gave an improved upper bound on the lightness of spanners in bounded genus graphs.
It is a natural question whether our bound is already tight for $g\geq 1$ or can further be improved.
In particular, it is unclear whether the worst-case lightness for a fixed stretch must depend linearly on $g$.

\newpage
\bibliography{ExplorationOfGraphsWithExcludedMinors}
\newpage

\end{document}